\newtheorem{theorem}{Theorem}[section]
\newtheorem{lemma}[theorem]{Lemma}
\newtheorem{corollary}[theorem]{Corollary}
\newtheorem{example}{Example}
\theoremstyle{remark}
\newtheorem{remark}{Remark}
\DeclareMathOperator{\bmax}{\vee}         
\DeclareMathOperator{\bmin}{\wedge}       
\DeclareMathOperator{\argmax}{arg\,max}
\DeclareMathOperator*{\E}{\mathbb{E}}
\DeclareMathOperator{\betadist}{Beta}      
\DeclareMathOperator{\normaldist}{\mathcal{N}}      
\DeclareMathOperator{\range}{range}        
\DeclareBoldMathCommand{\ind}{1}
\newcommand{\Cplusplus}{C\nolinebreak\hspace{-.05em}\raisebox{.4ex}{\tiny\bf +}\nolinebreak\hspace{-.10em}\raisebox{.4ex}{\tiny\bf +}}
\renewcommand{\S}{\mathcal{S}}
\newcommand{\reals}{\mathbb{R}}
\newcommand{\posreals}{\reals^+}           
\newcommand{\A}{\mathcal{A}}               
\newcommand{\B}{\mathcal{B}}               
\newcommand{\intersection}{\cap}           
\newcommand{\comp}{\textnormal{c}}                     
\newcommand{\der}{\textnormal{d}}                      
\newcommand{\intder}{\,\der}                           
\newcommand{\thetamed}{\hat{\theta}^\textnormal{med}}  
\newcommand{\approxLambda}{\tilde{\Lambda}}             
\newcommand{\approxPi}{\tilde{\Pi}}                     
\newcommand{\approxq}{\tilde{q}}                        
\newcommand{\ceil}[1]{\lceil #1 \rceil}
\newcommand{\mlphi}{\hat{\beta}}                        
\newcommand{\mlalpha}{\hat{\alpha}}                    
\newcommand{\mlj}{j^*}    
\newcommand{\vphi}{\Phi}
\newcommand{\vpsi}{\Psi}
\title{Fast Exact Bayesian Inference for Sparse Signals
in the Normal Sequence Model}
\author{Tim van Erven\thanks{%
  Research supported by the Netherlands Organization for Scientific
  Research, grant number VI.Vidi.192.095}
  \ and 
  Botond Szabo\thanks{%
  Research supported by the Netherlands Organization for Scientific
  Research, grant number 016.Veni.173.040}\\
  Mathematical Institute\\
  Leiden University}
\begin{document}
\maketitle

\begin{abstract}
We consider exact algorithms for Bayesian inference with model selection
priors (including spike-and-slab priors) in the sparse normal sequence
model. Because the best existing exact algorithm becomes numerically
unstable for sample sizes over $n=500$, there has been much attention
for alternative approaches like approximate algorithms (Gibbs sampling,
variational Bayes, etc.), shrinkage priors (e.g.~the Horseshoe prior and
the Spike-and-Slab LASSO) or empirical
Bayesian methods.
However, by introducing algorithmic ideas from online sequential
prediction, we show that exact calculations are feasible for much larger
sample sizes: for general model selection priors we reach n=25\,000, and
for certain spike-and-slab priors we can easily reach n=100\,000. We
further prove a de Finetti-like result for finite sample sizes that
characterizes exactly which model selection priors can be expressed as
spike-and-slab priors. The computational speed and numerical accuracy of
the proposed methods are demonstrated in experiments on simulated data,
on a differential gene expression data set, and to compare the effect of
multiple hyper-parameter settings in the beta-binomial prior. In our
experimental evaluation we compute guaranteed bounds on the numerical
accuracy of all new algorithms, which shows that the proposed methods
are numerically reliable whereas an alternative based on long division
is not.
\end{abstract}

\noindent
\textbf{Keywords:} spike-and-slab prior, model selection,
high-dimensional statistics

\section{Introduction}

In the sparse normal sequence model we observe a sequence $Y=(Y_1,...,Y_n)$
that satisfies 
\begin{align}
Y_i=\theta_i+\varepsilon_i,\quad i=1,\ldots,n,\label{model: sequence}
\end{align}
for independent standard normal random variables $\varepsilon_i$, where
$\theta=(\theta_1,...,\theta_n)$ is the unknown signal of interest. It
is assumed that the number of non-zero signal components~$s$ in $\theta$
is small compared to the size of the whole sample (i.e.\ $s=o(n)$).
Applications of this model include detecting differentially expressed
genes
\cite{quackenbush:2002,Kramer:2013,Malone:2011,thomas:2001,efron:2012},
bankruptcy prediction for publicly traded companies using Altman's
Z-score in finance \cite{altman:1968,altman:2014}, separation of the
background and source in astronomical images
\cite{Clements:2012,Guglielmetti:2009}, and  wavelet analysis
\cite{abramovich2006,johnstone2004}. The model is further of interest to
sanity check (approximate) inference methods for the more general sparse
linear regression model (see \cite{CastilloSchmidtHieberVdVaart2015} and
references therein), which reduces to the normal sequence model when the
design is the identity matrix.

The sparse normal sequence model, which is also called the sparse normal
means model, has been extensively studied from a frequentist perspective
(see, for instance, \cite{Golubev2002,BirgeMassart2001,abramovich2006}),
but here we consider Bayesian approaches, which endow $\theta$ with a
prior distribution. This prior serves as a natural way to introduce
sparsity into the model and the corresponding posterior can be used for
model comparison and uncertainty quantification (see
\cite{fernandez:etal:2001,scott:2006,LiangEtAl2008,bayarri:etal:2012} and references therein). One
natural and well-understood class of priors are \emph{model selection
priors} that take the following hierarchical form:
\begin{enumerate}[i.)]
\item First a sparsity level $s$ is chosen from a prior $\pi_n$ on
$\{0,1,...,n\}$.
\item Then, given $s$, a subset of nonzero coordinates $\S \subset \{0,1,\ldots,n\}$
of size $|\S|=s$ is selected uniformly at random.
\item Finally, given $s$ and $\S$, the means $\theta_\S = (\theta_i)_{i \in \S}$ corresponding to the nonzero
coordinates in $\S$ are endowed with a prior $G_\S$, while the remaining
coefficients $\theta_{\S^\comp} = (\theta_i)_{i \notin \S}$ are set to zero.
\end{enumerate}
As is common, we will choose the prior $G_\S$ on the nonzero coordinates
in a factorized form; i.e.\ $\theta_i\sim G$ for all $i\in \S$, where
$G$ is a fixed one-dimensional prior, which we assume to have a
density~$g$ (with respect to the Lebesgue measure). Under suitable
conditions on $\pi_n$ and $G$, the posterior has good frequentist
properties and contracts around the true parameter at the minimax rate,
as shown by Castillo and Van der Vaart \cite{castillo2012}. Notably,
they require the prior $\pi_n$ to decrease at an exponential rate. 

A special case of the model selection priors are the spike-and-slab
priors developed by \citet{mitchell:1988,george:1997}, where the
coefficients of $\theta$ are assigned prior probabilities
\begin{equation}\label{prior:spike-and-slab}
  \begin{split}
\theta_i\mid \alpha&~\sim~ (1-\alpha)\delta_0+ \alpha G,\qquad i=1,...,n,\\
 \alpha&~\sim~ \Lambda_n,
  \end{split}
\end{equation}
with $\delta_0$ the Dirac-delta measure at $0$ (a \emph{spike}) and $G$
the same one-dimensional prior as above (called the \emph{slab} in this
context). The a priori likelihood of nonzero coefficients is controlled
by the \emph{mixing parameter} $\alpha \in [0,1]$, and finally
$\Lambda_n$ is a hyper-prior on~$\alpha$. A typical choice for
$\Lambda_n$ is the beta distribution: $\alpha\sim
\betadist(\kappa,\lambda)$. In this case the prior on
the sparsity level in the model selection formulation takes the form
$\pi_n(s) = \binom{n}{s} \frac{B(\kappa + s, \lambda + n -
s)}{B(\kappa,\lambda)}$, where $B(\kappa,\lambda)$ denotes the beta
function with parameters $\kappa$ and $\lambda$. The resulting prior is
called the \emph{beta-binomial prior}. A natural choice is $\kappa =
\lambda = 1$ \cite{scott:2010}, which corresponds to a uniform prior on
$\alpha$, but this choice does not satisfy the exponential decrease
condition on $\pi_n$. Castillo and Van der Vaart therefore propose
$\kappa=1$ and $\lambda=n+1$, which does satisfy their exponential
decrease condition \cite[Example~2.2]{castillo2012}, and in
Section~\ref{sec:asymptotics} we confirm empirically that the latter
indeed leads to better posterior estimates for $\theta$.

Model selection priors set certain signal components to zero, which is
desirable for model selection, but makes computation of the posterior
difficult since the number of possible sets $\S$ is exponentially large
(i.e.\ $2^n$). Castillo and Van der Vaart \cite{castillo2012} do provide
an exact algorithm, based on multiplication of polynomials (see
Appendix~\ref{sec:CvdV}), but this algorithm runs into numerical problems
for sample sizes over $n=250$ or sometimes $n=500$ (see
Section~\ref{sec:simulations}) and it also requires $O(n^3)$ computation
steps, which makes it too slow to handle large $n$.

The computational difficulty of model selection priors has given rise to
a variety of alternative priors based on shrinkage. These include the
horseshoe prior \cite{carvalho:2010}, for which multiple scalable
implementations are available \cite{pas:et:al:2016,johndrow:2017}. The
corresponding posterior achieves the minimax contraction rate and, under
mild conditions, also provides reliable uncertainty quantification
\cite{pas:etal:2014,pas:etal:2017,pas:etal:2017b}. The posterior median
and draws from the horseshoe posterior are not sparse, but one can use
it for model selection after post-processing the posterior. An
alternative is to replace the spike in the spike-and-slab prior with a
Laplace distribution with very small variance, as in the Spike-and-Slab
LASSO \cite{rovckova:2018}. One can efficiently compute the maximum a
posteriori (MAP) estimator of the corresponding posterior distribution by
convex optimization.

Another way to deal with the computational problems for model selection
priors is to consider approximations. The available options include
Stochastic Search Variable Selection (SSVS) \cite{george:1997},
variational Bayes approximation \cite{zhou:et:al:2009}, Langevin Markov
Chain Monte Carlo \cite{Pereyra:2016}, Expectation Maximization
\cite{george:rovckova:2014},  Hamiltonian Monte Carlo
\cite{scott:2006} or empirical Bayes methods \cite{johnstone2004,martin:2014,belitser:2020}.

In this paper we return to the goal of exactly computing the posterior
for model selection priors, without changing the prior or introducing
approximations. In Section~\ref{sec:hierarchical} we propose a new
approach based on a representation of model selection priors by a Hidden
Markov Model (HMM) that comes from the literature on online sequential
prediction and data compression \cite{VolfWillems1998}, for which we can
apply the standard Forward-Backward algorithm \cite{Rabiner1989}. The
computational complexity of this algorithm is $O(n^2)$. To appreciate
the speed-up compared to $O(n^3)$ run time, see
Section~\ref{sec:realworld}, where this method runs in under $15$
minutes while the previous algorithm of Castillo and Van der Vaart would
take approximately $20$ days. Furthermore, in
Section~\ref{sec:lambdabinomial} we specialize to spike-and-slab priors
and introduce an even faster algorithm based on a discretization of the
$\alpha$ hyper-parameter, which has only $O(n^{3/2})$ run time. Using
results from online sequential prediction \cite{DeRooijVanErven2009}, we
show that this discretization provides an accurate approximation of the
posterior that can be made exact to arbitrary precision, provided that
the density of $\Lambda_n$ varies sufficiently slowly. Our conditions do
not directly allow $\kappa$ or $\lambda$ to depend on $n$ in the
beta-binomial prior, so we provide an extra result to cover the
important case that $\kappa=1$ and $\lambda=n+1$. 

Our two new approaches allow us to easily handle data sets of size
$n=25\,000$ for general model selection priors and $n=100\,000$ for the
subclass of spike-and-slab priors with sufficiently regular $\Lambda_n$,
which both substantially exceed the earlier limit of $n=500$. These
results are obtained on a standard laptop within a maximum time limit of
half an hour. Run times for larger sample sizes can be estimated by
extrapolating from Figure~\ref{fig:compare:exact:alg}.

In Section~\ref{sec:connection} we further derive sufficient and
necessary conditions to decide whether a model selection prior can be
written in the more efficient spike-and-slab form. Since the
distribution of the binary indicators for whether $\theta_i = 0$ or not
is exchangeable under the model selection priors, this amounts to a
finite sample de Finetti result for a restricted class of exchangeable
distributions.

In Section~\ref{sec:simulations}, we demonstrate the scalability and
numerical accuracy of the proposed methods on simulated data.  We also
show there that our deterministic algorithm can be used as a benchmark
to test the accuracy of approximation methods: we compare the
approximate posterior from Gibbs sampling and variational Bayes to the
exact posterior computed by our algorithm, which shows the surprisingly
limited number of decimal places to which their answers are reliable.
Then, in Section~\ref{sec:realworld}, we compare our methods to other
approaches suggested in the literature in an application to differential
gene expression for Ulcerative Colitis and Crohn’s Disease. In
Section~\ref{sec:asymptotics} we further use our new algorithms to
empirically investigate the importance of the exponential decrease
condition on $\pi_n$ by varying the hyper-parameters $\kappa$ and
$\lambda$ of the beta-binomial prior. We find that exponential decrease
is not just a sufficient condition for minimax posterior contraction,
but it also leads to better posterior estimates of $\theta$. The paper
is concluded by Section~\ref{sec:discuss}, where we discuss possible
extensions of our algorithms.

In addition to the main paper, we provide an accompanying R package that
implements our new methods \cite{VanErvenDeRooijSzabo2019}, and
supplementary material with several appendices. In
Appendix~\ref{sec:CvdV} we first recall the exact algorithm by Castillo
and Van der Vaart \cite{castillo2012}. We show how to resolve its
numerical stability issues by performing all intermediate computations
in a logarithmic representation. The bottleneck then becomes its
computational complexity, because it requires $O(n^3)$ steps, which is
prohibitive for large~$n$. Two natural ideas to speed up the algorithm
have been proposed by \citep{castillo2012,Castillo2017}, one based on
fast polynomial multiplication and one based on long division.
Surprisingly, although both approaches look very promising in theory, it
turns out that neither of them works well in practice: the theoretical
speed-ups for fast polynomial multiplication turn out to be so
asymptotic that they do not provide significant gains for any reasonable
$n$; and the long division approach becomes numerically unstable again.
In Appendix~\ref{sec:simulation:extra} we provide an additional
variation on an experiment from Section~\ref{sec:asymptotics}. 
Finally, Appendix~\ref{sec:proofs} contains all proofs.

\section{Exact Algorithms for Model Selection Priors}
\label{sec:hierarchical}

In this section we propose novel, exact algorithms for computing
(marginal statistics of) the posterior distribution corresponding to
model selection priors. For general model
selection priors we propose a model selection HMM algorithm, and for
spike-and-slab priors we introduce a faster method based on
discretization of the $\alpha$ hyper-parameter. The section is concluded
with a characterization of the subclass of model selection priors that
can be expressed in the more efficient spike-and-slab form.

\paragraph{Marginal Statistics}
\label{sec:marginal}

We are interested in computing the marginal posterior probabilities
that the coordinates of $\theta$ are nonzero:
\[
  q_{n,i} := \Pi_n(\theta_i \neq 0 \mid Y)
  \quad \text{for $i = 1,\ldots,n$.}
\]
These are sufficient to compute any other marginal statistics of
interest, because, conditionally on whether $\theta_i$ is $0$ or not, the
pair $(Y_i,\theta_i)$ is independent of all other pairs $(Y_j,\theta_j)_{j\neq
i}$. For instance, the marginal posterior means can
be expressed as
\[
  \E[\theta_i \mid Y]
    = q_{n,i} \E[\theta_i \mid Y_i, \theta_i \neq 0]
    = q_{n,i} \,\frac{\zeta(Y_i)}{\psi(Y_i)},
\]
where $\psi(y) = \int \phi(y-t) g(t) \intder t$ is the slab density and
$\zeta(y) = \int t \phi(y-t) g(t) \intder t$, with $\phi$ the standard
normal density. We may also obtain marginal quantiles by inverting the
marginal posterior distribution functions
\[
  \Pi_n(\theta_i \leq u \mid Y)
    = (1-q_{n,i}) \ind_{u \geq 0}
      + q_{n,i} \frac{\psi(Y_i,u)}{\psi(Y_i)},
\]
where $\psi(y,u) = \int_{-\infty}^u \phi(y-t)g(t)\intder t$. In
particular, the marginal medians correspond to
\[
  \thetamed_i = \Big[H_{n,i}^{-1}\Big(\frac{1}{2 q_{n,i}}\Big) \bmin 0 \Big]
            + \Big[H_{n,i}^{-1}\Big(1-\frac{1}{2 q_{n,i}}\Big) \bmax 0 \Big],
\]
where $H_{n,i}^{-1}$ is the inverse of the function $H_{n,i}(u) =
\frac{\psi(Y_i,u)}{\psi(Y_i)}$ and we use the conventions that
$H_{n,i}^{-1}(v) = -\infty$ for $v \leq 0$ and $H_{n,i}^{-1}(v) = \infty$
for $v \geq 1$, see \citep{castillo2012}.

\subsection{The Model Selection HMM Algorithm}
\label{sec:HMM}

Our first computationally efficient approach is based on a Hidden Markov
Model (HMM) that comes from the literature on online sequential
prediction and data compression \citep{VolfWillems1998}. This approach
makes it possible to reliably compute all marginal posterior
probabilities $q_{n,i}$ in only $O(n^2)$ operations for any model
selection prior.

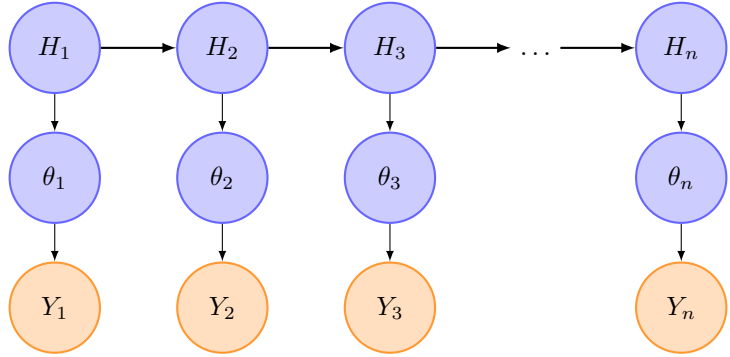
\begin{figure}
  \centering
  \tikzstyle{state}=[circle,thick,minimum size=1.2cm,draw=blue!60,fill=blue!20]
  \tikzstyle{observation}=[circle,thick,minimum size=1.2cm,draw=orange!80,fill=orange!25]

  \begin{tikzpicture}[>=latex,text height=1.5ex,text depth=0.25ex]
  \matrix[row sep=0.5cm,column sep=1cm]{%
    \node (H_1) [state] {$H_1$}; &
    \node (H_2) [state] {$H_2$}; &
    \node (H_3) [state] {$H_3$}; &
    \node (dots)        {$\ldots$}; &
    \node (H_n) [state] {$H_n$}; \\
    
    \node (theta_1) [state] {$\theta_1$}; & 
    \node (theta_2) [state] {$\theta_2$}; &
    \node (theta_3) [state] {$\theta_3$}; &
    &
    \node (theta_n) [state] {$\theta_n$}; \\

    \node (Y_1) [observation] {$Y_1$}; & 
    \node (Y_2) [observation] {$Y_2$}; &
    \node (Y_3) [observation] {$Y_3$}; &
    &
    \node (Y_n) [observation] {$Y_n$}; \\
  };

  \path[->]
    (H_1) edge[thick] (H_2)	
    (H_2) edge[thick] (H_3)	
    (H_3) edge[thick] (dots)	
    (dots) edge[thick] (H_n)	

    (H_1) edge (theta_1)	
    (H_2) edge (theta_2)	
    (H_3) edge (theta_3)	
    (H_n) edge (theta_n)	

    (theta_1) edge (Y_1)	
    (theta_2) edge (Y_2)	
    (theta_3) edge (Y_3)	
    (theta_n) edge (Y_n)	
  ;
  \end{tikzpicture}
  \caption{The model selection prior as a Hidden Markov Model}
  \label{fig:SSHMM}
\end{figure}

To define the HMM, we will encode the subset of nonzero coordinates $\S
\subset \{0,1,\ldots,n\}$ as a binary vector $B = (B_1,\ldots,B_n)$,
where $B_i = 1$ if $i \in \S$ and $B_i = 0$ otherwise. The crucial
observation is that the conditional probabilities of the model selection
prior 
\begin{equation}\label{eqn:n1sufficient}
  \Pi_n(B_{i+1} \mid B_1,\ldots,B_i)
    = \Pi_n(B_{i+1} \mid M_i)
\end{equation}
only depend on the total number of nonzeros $M_i = \sum_{j=1}^i B_j \in
\{0,\ldots,i\}$ in the first~$i$ coordinates and not on the locations of
these coordinates. We can use this observation to interpret the model
selection prior as the model selection HMM shown in
Figure~\ref{fig:SSHMM}, where each hidden state $H_i = (B_i, M_i)$
contains sufficient information to compute both the transition
probabilities
\[
  P(H_{i+1} \mid H_i) =
    \begin{cases}
      \Pi_n(B_{i+1} \mid M_i)
        & \text{if $M_{i+1} = M_i + B_{i+1}$,}\\
      0 & \text{otherwise,}
    \end{cases}
\]
and the conditional distribution of $\theta_i$ given $H_i$:
\begin{alignat*}{2}
  \theta_i &= 0 \quad (\text{a.s.}) &\qquad& \text{if $B_i = 0$,}\\
  \theta_i &\sim G && \text{if $B_i = 1$.}
\end{alignat*}
In fact, in our implementation we will integrate out $\theta_i$ to
directly obtain the conditional density
\begin{align*}
  p(Y_i \mid H_i) =
    \begin{cases}
      \phi(Y_i) & \text{if $B_i = 0$,}\\
      \psi(Y_i) & \text{if $B_i = 1$.}
    \end{cases} 
\end{align*}
(Note that $\psi(Y_i)$ is the conditional density of observation $Y_i$
for slabs, while $\phi(Y_i)$ is the density of $Y_i$ for spikes.)
Finally, the initial probabilities of $H_1$ are
\[
  P(H_1) =
    \begin{cases}
      \Pi_n(B_1) & \text{if $M_1 = B_1$,}\\
      0        & \text{otherwise.}
    \end{cases}
\]
We note that the sequence of hidden states $H_1,\ldots,H_n$ is in
one-to-one correspondence with $\S$. Consequently, since the model
selection HMM expresses the same joint distribution on $H_1,\ldots,H_n$
as the model selection prior, and the conditional distribution of
$\theta$ and $Y$ given $H_1,\ldots,H_n$ is also the same, it follows
that the model selection HMM is equivalent to the corresponding model
selection prior.

What we gain is that, for HMMs, standard efficient algorithms are
available, whose run times depend on the number of state transitions
with nonzero probabilities $P(H_{i+1} \mid H_i)$ \citep{Rabiner1989}.
For our purposes, we will use the Forward-Backward algorithm to compute
$\Pi_n(H_i \mid Y)$ for all~$i$ in $O(n^2)$ steps, from which we can
obtain $q_{n,i} = \Pi_n(B_i=1 \mid Y)$ for all~$i$ in another $O(n^2)$ steps by
marginalizing. For numerical accuracy, we perform all calculations using
the logarithmic representation discussed in
Appendix~\ref{sec:logarithmic}.

Let $Y_a^b = (Y_a,\ldots,Y_b)$. Then the Forward phase in this algorithm
computes the densities $p(Y_1^i, H_i = h_i)$ from $p(Y_1^{i-1}, H_{i-1} =
h_{i-1})$ for all $i=1,\ldots,n$ and all values $h_i$ of the hidden
states using the recursion
\begin{align*}
  p(Y_1^i, h_i) =
    \begin{cases}
      p(Y_1 \mid h_1) P(h_1)
        & \text{for $i=1$,}\\
      \displaystyle p(Y_i \mid h_i) \sum_{h_{i-1}} p(Y_1^{i-1},h_{i-1}) P(h_i
      \mid h_{i-1})
        & \text{for $1 < i \leq n$.}
    \end{cases}
\end{align*}
After the Forward phase, the
Forward-Backward algorithm performs the Backward phase, which computes
$p(Y_{i+1}^n \mid H_i = h_i)$ from $p(Y_{i+2}^n \mid H_{i+1} = h_{i+1})$
for all $i = n,\ldots,1$ using the recursion
\begin{align*}
  p(Y_{i+1}^n \mid h_i) =
    \begin{cases}
      1 & \text{for $i=n$,}\\
      \displaystyle \sum_{h_{i+1}} p(Y_{i+2}^n \mid h_{i+1}) p(Y_{i+1} \mid h_{i+1}) P(h_{i+1} \mid h_i)
        & \text{for $1 \leq i < n$.}
    \end{cases}
\end{align*}
Combining the results from the Forward and Backward phases, we can
compute
\[
  \Pi_n(h_i \mid Y) \propto p(Y_1^i, h_i) p(Y_{i+1}^n \mid h_i)
\]
for all $i$ and $h_i$ as desired.

The HMM described here was introduced by \citep{VolfWillems1998} for the
$\betadist(1/2,1/2)$-binomial prior (i.e.\ the spike-and-slab prior with
$\Lambda_n = \betadist(1/2,1/2)$) in the context of the Switching Method
for data compression. See \citep{KoolenDeRooij2013} for an overview of
many variations on this HMM. Indeed, for any
$\betadist(\kappa,\lambda)$-binomial prior this HMM is particularly
natural, because the transition probabilities of the hidden states have
a closed-form expression:
\[
  \Pi_n(B_{i+1}=1 \mid B_1,\ldots,B_i)
    = \Pi_n(B_{i+1}=1 \mid M_i)
    = \frac{\kappa + M_i}{\kappa + \lambda + i}.
\]
Here we add the observations that, even when the conditional
probabilities $\Pi_n(B_{i+1} \mid B_1,\ldots,B_i)$ are not available in
closed form for a given model selection prior, they still satisfy
\eqref{eqn:n1sufficient} and can be efficiently obtained from
\[
  \Pi_n(B_{i+1} \mid B_1,\ldots,B_i)
    = \frac{v_{i+1}(M_i + B_{i+1})}{v_i(M_i)}, 
\]
where $v_i(m) = \Pi_n(B_1 =
b_1,\ldots,B_i = b_i)$ is the joint probability of any sequence
$b_1,\ldots,b_i$ with $m$ ones. These joint probabilities can be
pre-computed for $i = n,\ldots,1$ in $O(n^2)$ steps using the recursion
\[
  v_i(m) = 
    \begin{cases}
      \pi_n(m)/\binom{n}{m}
        & \text{for $i = n$,}\\
      v_{i+1}(m) + v_{i+1}(m+1) 
        & \text{for $1 \leq i < n$.}
    \end{cases}
\]
Thus we can calculate the marginal posterior probabilities in $O(n^2)$
steps for any model selection prior, not just for beta-binomial priors.
The numerical accuracy of this algorithm is demonstrated in
Section~\ref{sec:simulations}.

\subsection{A Faster Algorithm for Spike-and-Slab Priors}
\label{sec:lambdabinomial}

In this section we restrict our attention to the spike-and-slab subclass
of model selection priors, for which we propose further speed-ups. It
is intuitively clear that the mixing hyper-parameter  $\alpha$ plays a
key role in the behavior of the prior distribution. The optimal choice
of $\alpha$ heavily depends on the sparsity parameter $s$ of the model.
For instance in case of Cauchy slabs the optimal oracle choice $\alpha=(s/n)\sqrt{\log(n/s)}$ results in minimax posterior contraction \cite{castillo2012} and reliable uncertainty quantification \cite{castillo:szabo:2018} in $\ell_2$-norm. However, in practice the
sparsity level $s$ is (typically) not known in advance. Therefore one
cannot use the optimal oracle choice for $\alpha$. In
\cite{castillo2012} it was also shown that by choosing $\alpha=1/n$ the
posterior contracts around the truth at the
nearly optimal rate $s\log(n)$. This seemingly solves the
problem of choosing the tuning hyper-parameter. However, a related simulation
study in \cite{castillo2012} shows that hard-thresholding at the corresponding $\sqrt{2\log (n)}$ level pairs up with substantially worse practical
performance; see Tables~1 and~2 in
\cite{castillo2012}.
Furthermore, in view of \cite{castillo:szabo:2018} the choice of
$\alpha=1/n$ imposes too strong prior assumptions, resulting in overly small
posterior spread which leads to unreliable Bayesian uncertainty
quantification, i.e.\ the frequentist coverage of the $\ell_2$-credible
set will tend to zero.

Therefore in practice one has to consider a data driven (adaptive)
choice of the hyper-parameter $\alpha$. A computationally appealing
approach is the empirical Bayes method, where the maximum marginal
likelihood estimator is plugged into the posterior. The corresponding
posterior mean achieves a (nearly) minimax convergence rate
\cite{johnstone2004}, and for slab distributions with polynomial tails
the corresponding posterior contracts around the truth at the optimal
rate \cite{castillo:mismer:2018}. However for light-tailed slabs (e.g.\
Laplace) the empirical Bayes posterior distribution will achieve a
highly suboptimal contraction rate around the truth; see again
\cite{castillo:mismer:2018}.

Another standard (and from a Bayesian perspective more natural) approach
is to endow the hyper-parameter $\alpha$ with another layer of prior
$\Lambda_n$. However, computational problems may arise using standard
Gibbs sampling techniques for sampling from the posterior; see Section
\ref{sec:compare:approx} for a demonstration of this problem on a
simulated data set. In the literature various speed-ups were proposed.
One can for instance focus on relevant sub-sequences of the sequential
parameter $\theta$ and apply the Gibbs sampler only on them. Another
approach is to apply the Hamiltonian Monte Carlo method, see for
instance \cite{scott:2006}. However, none of these approaches provides
an easy way to quantify their approximation error when run for a finite
number of iterations. In the next section we propose a
deterministic algorithm to approximate the marginal posterior
probabilities $q_{n,i}$ for spike-and-slab priors, with a guaranteed
bound on its approximation error that can be made arbitrarily close to
zero.

\subsubsection{Approximation via Discretization of the Mixing Parameter}\label{sec:discrete}

For general model selection priors the fast HMM algorithm from
Section~\ref{sec:HMM} requires $O(n^2)$ steps. However, for the special
case of spike-and-slab priors we can do even better: we can approximate
the corresponding posterior to arbitrary precision using only $O(n^{3/2})$ steps, provided that the density $\lambda_n$ of the mixing distribution $\Lambda_n$ on $\alpha$
satisfies certain regularity conditions.

\paragraph{The Algorithm}

Our approach is to approximate the prior $\Lambda_n$  by a prior $\approxLambda_n$
that is supported on $k = O(n^{1/2})$ discretization points
$\alpha_1,\ldots,\alpha_k$. Then let $\Pi_n$ be the original spike-and-slab prior corresponding to a
given choice of $\Lambda_n$, and let $\approxPi_n$ be the prior
corresponding to $\approxLambda_n$. Conditional on~$\alpha$, the pairs
$(\theta_i,Y_i)$ are independent. Computing the likelihood 
\[
\prod_{i=1}^n \Big((1- \alpha) \phi(Y_i) + \alpha \psi(Y_i)\Big)
\]
for a single $\alpha$ therefore takes $O(n)$ steps, and consequently we
can obtain the posterior probabilities $\approxPi_n(\alpha_j \mid Y)$ of
all $k$ discretization points in $O(kn)$ steps. We can then compute
\[
  \approxq_{n,i} := \approxPi_n(\theta_i \neq 0 \mid Y) = \sum_{j=1}^k
  \approxPi_n(\alpha_j \mid Y) \frac{\alpha_j \psi(Y_i)}{(1-\alpha_j)
  \phi(Y_i) + \alpha_j \psi(Y_i)}
\]
in another $O(k)$ steps independently for each $i$, leading to a total
run time of $O(kn) = O(n^{3/2})$ steps. We again perform all calculations
using the logarithmic representation from Appendix~\ref{sec:logarithmic}.

\paragraph{Choice of Discretization Points}

As in Section~\ref{sec:HMM}, let $B = (B_1,\ldots,B_n)$ be latent binary random
variables such that $B_i = 0$ if $\theta_i = 0$ and $B_i = 1$ otherwise.
We will choose discretization points $\alpha_1,\ldots,\alpha_k$ and the
discretized prior $\approxLambda_n$ such that the ratio
$\Pi_n(B=b)/\approxPi_n(B=b)$ is in $[1-\epsilon,1+\epsilon]$ for all
realizations $b$ of $B$, where $\epsilon > 0$ can be made arbitrarily
small. Since, conditional on $B$, the discretized model is the same as
the original model, this implies that the posterior probabilities
$\Pi_n(\theta \mid Y)$ and $\approxPi_n(\theta \mid Y)$ must also be
within a factor of $(1+\epsilon)/(1-\epsilon) \approx 1$.

Conditional on the mixing hyper-parameter $\alpha$, the sequence $B$ consists of
independent, identically distributed Bernoulli random variables, and
$\Pi_n$ and $\approxPi_n$ respectively assign hyperpriors $\Lambda_n$
and~$\approxLambda_n$ to the success probability $\alpha$. To discretize
$\alpha$, we will follow an approach introduced by
\cite{DeRooijVanErven2009} in the context of online sequential
prediction with adversarial data. They observe that it is more
convenient to reparametrize the Bernoulli model using the \emph{arcsine
transformation} \cite{Anscombe1948,FreemanTukey1950}, which makes the
Fisher information constant:
\begin{align*}
  \beta(\alpha) &= \arcsin \sqrt{\alpha}, &
  \alpha(\beta) &= \sin^2 \beta &
  \text{for $\beta \in [0,\pi/2]$.}
\end{align*}
We will use a uniform discretization of $\beta$ with $k$ discretization
points spaced $\delta_k = \pi/(2k) \propto 1/\sqrt{n}$ apart, which in the
$\alpha$-parametrization maps to a spacing that is proportional to
$1/\sqrt{n}$ around $\alpha=1/2$ but behaves like $1/n$ for $\alpha$
near~$0$ or~$1$. Specifically, let $\alpha_j = \alpha(\beta_j)$ with
\begin{align*}
  \beta_1 &= \frac{1}{2} \delta_k, &
  \beta_2 &= \frac{3}{2} \delta_k, &
  \beta_3 &= \frac{5}{2} \delta_k, &
  &\ldots, &
  \beta_k &= \frac{\pi}{2} - \frac{1}{2}\delta_k.
\end{align*}
The prior mass of each $\alpha$ under $\Lambda_n$ is then reassigned to
its closest discretization point in the $\beta$-parametrization. If
$\Lambda_n$ has no point-masses exactly half-way between discretization
points, then this means that
\begin{equation}\label{eqn:discreteprior}
  \approxLambda_n(\alpha_j)
    = \Lambda_n\big([\alpha(\beta_j - \delta_k/2), \alpha(\beta_j +
    \delta_k/2)]\big).
\end{equation}
Otherwise, if $\Lambda_n$ does have such point-masses, their masses may be
divided arbitrarily over their neighboring discretization points.

\paragraph{Approximation Guarantees}

For simplicity we will assume that $\Lambda_n$ has a Lebesgue-density
$\lambda_n(\alpha) = \der \Lambda_n(\alpha)/\der \alpha$. It will also
be convenient to let $\alpha_0 = 0$ and $\alpha_{k+1} = 1$, and to
define
\[
  P_\alpha(n,\mlalpha) = \alpha^{\mlalpha n}(1-\alpha)^{(1-\mlalpha)n},
    \qquad \text{for $\mlalpha \in [0,1]$, $n \in \posreals$,}
\]
which may be interpreted as the Bernoulli$(\alpha)$ likelihood of a
binary sequence with maximum likelihood parameter $\mlalpha$. In
particular, if $\mlalpha = s/n$ with $s$ the number of ones in $b \in
\{0,1\}^n$ for integer $n$, then 
\begin{align*}
  \Pi_n(B=b) &= \int_0^1 P_\alpha(n,\mlalpha) \lambda_n(\alpha) \der
      \alpha,
      &
  \approxPi_n(B=b)
    &= \sum_{j=1}^k P_{\alpha_j}(n,\mlalpha) \approxLambda_n(\alpha_j).
\end{align*}
There is no reason to restrict the definition of $P_\alpha(n,\mlalpha)$
to integer $n$ or to the discrete set of $\mlalpha$ that can be maximum
likelihood parameters at sample size $n$, however, and following
\citep{DeRooijVanErven2009} we extend the definition to all $\mlalpha
\in [0,1]$ and all real $n > 0$, which will be useful below to handle
the $\betadist(1,n+1)$ prior.
\begin{theorem}\label{thm:discrete}
  Take $k = 2 (m+1) \ceil{\sqrt{n}\,} + 1$ for any integer $m$, and
  suppose there exists a constant $L\geq 0$ (which is allowed to depend on
  $n$) such that
  \begin{equation}\label{cond:prior}
    \frac{\sup_{\alpha \in [\alpha_j,\alpha_{j+1}]}
    \lambda_n(\alpha)\sqrt{\alpha(1-\alpha)}}{\inf_{\alpha \in[\alpha_j,\alpha_{j+1}]}
    \lambda_n(\alpha)\sqrt{\alpha(1-\alpha)}}\leq e^{L\sqrt{n}\delta_k},
    \quad\text{for all $j=0,...,k$.}
  \end{equation}
  Then there exists a constant $C_L > 0$ that depends only on $L$, such
  that, if $m > C_L$, we have, for $\epsilon = C_L/m$,
  \begin{equation}\label{eqn:approxBernoulli}
    (1-\epsilon) 
      \leq \frac{\int_0^1 P_\alpha(n,\mlalpha) \lambda_n(\alpha) \der
      \alpha}{\sum_{j=1}^k P_{\alpha_j}(n,\mlalpha) \approxLambda_n(\alpha_j)}
      \leq (1+\epsilon),
    \qquad \text{for all $\mlalpha \in [0,1]$,}
  \end{equation}
  and consequently
  \begin{equation}\label{eqn:posteriorratio}
    \frac{1-\epsilon}{1+\epsilon}
      \leq \frac{\Pi_n(\theta \mid Y)}{\approxPi_n(\theta \mid Y)}
      \leq \frac{1+\epsilon}{1-\epsilon}
    \qquad \text{almost surely.}
  \end{equation}
\end{theorem}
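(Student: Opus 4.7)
The posterior bound \eqref{eqn:posteriorratio} follows from the prior-side bound \eqref{eqn:approxBernoulli} by a short Bayes-rule manipulation, so the bulk of the work is to establish \eqref{eqn:approxBernoulli}. My plan is first to reparametrize via $\beta = \arcsin\sqrt{\alpha}$, writing
\[
  \int_0^1 P_\alpha(n,\mlalpha)\lambda_n(\alpha)\,\der\alpha
  = \int_0^{\pi/2} P_{\alpha(\beta)}(n,\mlalpha)\,\mu_n(\beta)\,\der\beta,
\]
where $\mu_n(\beta) := 2\lambda_n(\alpha(\beta))\sqrt{\alpha(\beta)(1-\alpha(\beta))}$ is the $\beta$-density of $\Lambda_n$. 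By definition \eqref{eqn:discreteprior} the discretized masses satisfy $\approxLambda_n(\alpha_j) = \int_{I_j}\mu_n(\beta)\,\der\beta$ on the tiling cells $I_j = [\beta_j - \delta_k/2, \beta_j + \delta_k/2]$, so the denominator in \eqref{eqn:approxBernoulli} is exactly the midpoint-rule approximation to the reparametrized integral.

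The technical core is a uniform-in-$\mlalpha$ accuracy estimate for this midpoint rule, and this is where the arcsine parametrization pays off: the Bernoulli Fisher information is the constant $4$ in the $\beta$-coordinates, so $\log P_{\alpha(\beta)}(n,\mlalpha)$ is quadratic to leading order in $\beta-\mlphi$ with curvature $-4n$ (with $\mlphi=\arcsin\sqrt{\mlalpha}$); equivalently, $P_{\alpha(\beta)}(n,\mlalpha)$ has a near-Gaussian profile in $\beta$ with standard deviation of order $1/\sqrt{n}$, uniformly in $\mlalpha \in [0,1]$. Because $\delta_k \le \pi/(4(m+1)\sqrt{n})$, roughly $m$ cells fall within one such standard deviation of the mode, and combining per-cell Euler--Maclaurin estimates with sub-Gaussian tail bounds outside the effective support yields relative error $O(1/m)$ for the integral of $P_{\alpha(\beta)}$ against a constant weight; the appropriate uniform statement is essentially the one from \cite{DeRooijVanErven2009}. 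Condition \eqref{cond:prior} handles the non-constant density, since it translates directly to $\sup_{I_j}\mu_n/\inf_{I_j}\mu_n \le e^{2L\sqrt{n}\delta_k} = e^{O(L/m)}$, contributing another multiplicative $O(L/m)$ term. Combining the two sources gives $\epsilon = C_L/m$ for a constant $C_L$ depending only on $L$, which is \eqref{eqn:approxBernoulli}. I expect the main obstacle to be carrying out this uniform-in-$\mlalpha$ midpoint-rule argument cleanly at the boundary cases $\mlalpha \in \{0,1\}$, where $\mlphi$ lies at an endpoint of $[0,\pi/2]$ and the quadratic approximation to $\log P_{\alpha(\beta)}$ degrades; the $\sqrt{\alpha(1-\alpha)}$ weight inserted into \eqref{cond:prior} is precisely what absorbs the Jacobian $|\der\alpha/\der\beta| = 2\sqrt{\alpha(1-\alpha)}$ at the boundary and keeps the scheme compatible with priors like $\betadist(1,n+1)$ that concentrate mass near zero.

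Finally, from \eqref{eqn:approxBernoulli} I would deduce \eqref{eqn:posteriorratio} as follows. Since $P_\alpha(n,s/n) = \alpha^s(1-\alpha)^{n-s}$ is exactly the conditional probability given $\alpha$ of any specific $b \in \{0,1\}^n$ with $s$ ones, integrating out $\alpha$ gives $\Pi_n(B=b) \in [1-\epsilon,1+\epsilon] \cdot \approxPi_n(B=b)$ for every $b$. Because the conditional distribution of $\theta$ given $B$ and the observation likelihood $p(Y\mid\theta)$ are identical under $\Pi_n$ and $\approxPi_n$, Bayes' rule reduces the posterior ratio to
\[
  \frac{\Pi_n(\theta\mid Y)}{\approxPi_n(\theta\mid Y)}
  = \frac{\Pi_n(B=B(\theta))}{\approxPi_n(B=B(\theta))} \cdot \frac{\approxPi_n(Y)}{\Pi_n(Y)},
\]
where $B(\theta)$ is the sparsity pattern of $\theta$. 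The first factor lies in $[1-\epsilon,1+\epsilon]$ by the prior bound, and the second factor lies in $[1-\epsilon,1+\epsilon]$ by averaging that bound over $b$ with the positive weights $p(Y\mid B=b)\approxPi_n(B=b)$; multiplying yields \eqref{eqn:posteriorratio}.
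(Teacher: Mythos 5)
Your overall architecture matches the paper's: reparametrize by $\beta=\arcsin\sqrt{\alpha}$ so that the Fisher information is constant and the likelihood has a near-Gaussian profile of width $\asymp 1/\sqrt{n}$ in $\beta$, observe that $\approxLambda_n(\alpha_j)$ is exactly the $\Gamma_n$-mass of the cell $[\beta_j-\delta_k/2,\beta_j+\delta_k/2]$, reduce \eqref{eqn:posteriorratio} to \eqref{eqn:approxBernoulli} by conditioning on the sparsity pattern $B$, and lean on the likelihood-ratio lemma of \cite{DeRooijVanErven2009}. The final Bayes-rule step is essentially right (one small slip: $\approxPi_n(Y)/\Pi_n(Y)$ lies in $[1/(1+\epsilon),1/(1-\epsilon)]$, not in $[1-\epsilon,1+\epsilon]$; it is precisely this reciprocal that produces the claimed interval $[(1-\epsilon)/(1+\epsilon),(1+\epsilon)/(1-\epsilon)]$ rather than $[(1-\epsilon)^2,(1+\epsilon)^2]$).

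The genuine gap is in the technical core, which you assert rather than prove, and the mechanism you name would not go through under the stated hypotheses. A per-cell Euler--Maclaurin (midpoint-rule) error bound requires control of derivatives of the integrand $P_\beta(\mlalpha)\gamma_n(\beta)$, hence of $\gamma_n'$ and $\gamma_n''$; but condition \eqref{cond:prior} only bounds the \emph{oscillation} (the sup/inf ratio) of $\gamma_n$ over each cell, and no differentiability of $\lambda_n$ is assumed anywhere. The paper's proof deliberately avoids second-order quadrature estimates: away from the mode it uses only the monotonicity of $\beta\mapsto P_\beta(\mlalpha)$ on each side of $\mlphi$ to shift prior mass onto the nearest grid point, paying a factor $e^{2L\sqrt{n}\delta_k}\le 1+C_2/m$ per shift via \eqref{cond:phiprior}. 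This leaves exactly the two places your sketch does not address, and where all the real work sits: (i) the cell $\A=[\beta_{\mlj},\beta_{\mlj+1}]$ containing the mode, where monotonicity fails; the paper excises it and bounds its contribution by comparing $\Gamma_n(\A)$ against a widened interval $\B$ of about $m$ cells around the mode, using Lemma~3 of \cite{DeRooijVanErven2009} to control $P_{\mlphi}/P_\beta$ on $\B$ at cost $e^{16n(m+1)^2\delta_k^2}=O(1)$; and (ii) the boundary, where the grid stops $\delta_k/2$ short of $\{0,\pi/2\}$ and the factor $1\vee(\hat\beta/\beta_2)$ in that lemma degenerates, which is why $\B$ is truncated to $[m\delta_k/2,\pi/4]$ (or its mirror image) and the cases $\mlj=0,k$ are handled separately in the lower bound. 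You correctly flag the boundary as the danger point, but flagging it is not the same as resolving it; without (i) and (ii) the claimed uniform $O(1/m)$ relative error, and hence \eqref{eqn:approxBernoulli}, is not established.
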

The result \eqref{eqn:approxBernoulli} holds even for non-integer $n$,
but \eqref{eqn:posteriorratio} implicitly assumes that $n$ is the number
of observations in $Y$ and must therefore be integer. The proof is
deferred to Appendix~\ref{sec:proofdiscrete} in the supplementary
material. We note already that condition~\eqref{cond:prior} is
essentially a Lipschitz condition on the log of the density of
$\Lambda_n$ in the $\beta$-parametrization (see \eqref{cond:phiprior} in
Appendix~\ref{sec:proofdiscrete}). Under this condition, the theorem
shows that, by increasing~$m$, we can approximate $\Pi_n(\theta \mid Y)$
to any desired accuracy, at the cost of increasing our computation time,
which scales linearly with $m$.

\begin{remark}
For given $m$ and (integer) $n$, the
tightest possible value of $\epsilon$ in \eqref{eqn:posteriorratio} may
be determined numerically, by maximizing and minimizing the ratio in
\eqref{eqn:approxBernoulli} over $\mlalpha = s/n$ for $s=0,\ldots,n$.
\end{remark}

\paragraph{Extension to Arbitrary Beta Priors}

The Lipschitz condition \eqref{cond:prior} excludes the important
$\betadist(1,n+1)$ prior, because its density varies too rapidly. We
therefore describe an extension that can handle any
$\betadist(\kappa,\lambda)$ prior with $\kappa,\lambda \geq 1/2$, even
when $\kappa$ or $\lambda$ grows linearly with $n$.

To this end, we interpret $\betadist(\kappa,\lambda)$ as the posterior
of a $\betadist(1/2,1/2)$ prior after observing $\kappa-1/2$ fake ones
and $\lambda-1/2$ fake zeros. Our effective sample size for the fake
observations and the real data together is then $n' = n + \kappa +
\lambda - 1$ (which need not be an integer). Since $\betadist(1/2,1/2)$
is uniform in the $\beta$-parametrization, it satisfies
\eqref{cond:prior} with the best possible constant: $L=0$, so applying
Theorem~\ref{thm:discrete} we find that \eqref{eqn:approxBernoulli}
holds for sample size $n'$ with $\epsilon$ as in the theorem. 
We then take the discretization points for sample size $n'$ with
corresponding discrete prior $\approxLambda_{n'}$ defined by
\eqref{eqn:discreteprior} (which is actually uniform, with probabilities
$1/k$, because $\betadist(1/2,1/2)$ is uniform in the
$\beta$-parametrization), and we compute a new prior $\approxLambda_n$ on
these discretization points as the posterior from $\approxLambda_{n'}$
after observing $\kappa -1/2$ fake ones and $\lambda-1/2$ fake zeros:
\begin{equation}
  \label{eqn:fastforwardbeta}
  \approxLambda_n(\alpha_j) = 
    \frac{\frac{1}{k}\, \alpha_j^{\kappa-1/2}(1-\alpha_j)^{\lambda-1/2}}
         {\sum_{j'=1}^k \frac{1}{k}\,
         \alpha_{j'}^{\kappa-1/2}(1-\alpha_{j'})^{\lambda-1/2}}
  \qquad \text{for $j=1,\ldots,k$.}
\end{equation}
\begin{corollary}\label{cor:betaprior}
  For any $\kappa \geq 1/2,\lambda \geq 1/2$ and positive integer $n$,
  let $k = 2 (m+1) \ceil{\sqrt{n'}\,} + 1$, where $n' = n + \kappa +
  \lambda - 1$ and $m > C_0$ is any integer that exceeds the constant
  $C_L$ from Theorem~\ref{thm:discrete} for $L=0$. Let $\Lambda_n$ be
  the $\betadist(\kappa,\lambda)$ prior and let $\approxLambda_n$ be as
  in \eqref{eqn:fastforwardbeta}. Then \eqref{eqn:approxBernoulli} and
  \eqref{eqn:posteriorratio} hold with $\epsilon' =
  2\epsilon/(1-\epsilon)$ instead of $\epsilon = C_0/m$.
\end{corollary}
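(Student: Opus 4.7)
The plan is to exploit beta--Bernoulli conjugacy to reduce the corollary directly to Theorem~\ref{thm:discrete} applied to a $\betadist(1/2,1/2)$ prior at an enlarged sample size, for which the Lipschitz condition \eqref{cond:prior} holds with $L=0$. The key algebraic observation is that, for $\mlalpha \in [0,1]$,
\begin{equation*}
  P_\alpha(n,\mlalpha)\,\alpha^{\kappa-1/2}(1-\alpha)^{\lambda-1/2}
    = P_\alpha(n',\mlalpha'),
  \qquad \mlalpha' = \frac{\mlalpha\, n + \kappa - 1/2}{n'},
\end{equation*}
with $n' = n+\kappa+\lambda-1$; the assumptions $\kappa,\lambda\geq 1/2$ are exactly what is needed to ensure $\mlalpha' \in [0,1]$ whenever $\mlalpha \in [0,1]$. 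Because the $\betadist(\kappa,\lambda)$ density equals a constant multiple of $\alpha^{\kappa-1/2}(1-\alpha)^{\lambda-1/2}$ times the $\betadist(1/2,1/2)$ density $\tilde g$, this identity rewrites $\int_0^1 P_\alpha(n,\mlalpha)\lambda_n(\alpha)\der\alpha$ as a constant $c=c(\kappa,\lambda,n)$, independent of $\mlalpha$, times $\int_0^1 P_\alpha(n',\mlalpha')\,\tilde g(\alpha)\,\der\alpha$.

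Next I would apply Theorem~\ref{thm:discrete} to the $\betadist(1/2,1/2)$ prior at the (possibly non-integer) sample size $n'$ using the prescribed $k$ and $L=0$. Since $\tilde g(\alpha)\sqrt{\alpha(1-\alpha)}$ is constant, the discretized prior \eqref{eqn:discreteprior} for $\betadist(1/2,1/2)$ is uniform with mass $1/k$ at every $\alpha_j$, and the theorem yields, for $\epsilon = C_0/m$,
\begin{equation*}
  (1-\epsilon)\,\frac{1}{k}\sum_{j=1}^k P_{\alpha_j}(n',\mlalpha')
    \;\leq\; \int_0^1 P_\alpha(n',\mlalpha')\,\tilde g(\alpha)\,\der\alpha
    \;\leq\; (1+\epsilon)\,\frac{1}{k}\sum_{j=1}^k P_{\alpha_j}(n',\mlalpha').
\end{equation*}
Translating each $P_{\alpha_j}(n',\mlalpha')$ back as $P_{\alpha_j}(n,\mlalpha)\,\alpha_j^{\kappa-1/2}(1-\alpha_j)^{\lambda-1/2}$ and recognising the normaliser in \eqref{eqn:fastforwardbeta}, the sum equals $Z_k/k$ times $\sum_j P_{\alpha_j}(n,\mlalpha)\,\approxLambda_n(\alpha_j)$, where $Z_k = \sum_{j'}\alpha_{j'}^{\kappa-1/2}(1-\alpha_{j'})^{\lambda-1/2}$. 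Combining with the previous step gives, for every $b\in\{0,1\}^n$ with ML parameter $\mlalpha=s/n$,
\begin{equation*}
  \Pi_n(B=b) \;=\; C\,r_b\,\approxPi_n(B=b), \qquad r_b\in[1-\epsilon,1+\epsilon],
\end{equation*}
with a single constant $C$ that is independent of $b$.

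The final step removes $C$ by normalisation. Summing the identity over $b$ and using that both $\Pi_n$ and $\approxPi_n$ assign total mass $1$ to $B$ yields $1 = C\sum_b r_b\,\approxPi_n(B=b)$; the right-hand side is a convex combination in $[1-\epsilon,1+\epsilon]$, forcing $C\in[1/(1+\epsilon),1/(1-\epsilon)]$. Hence $\Pi_n(B=b)/\approxPi_n(B=b)\in[(1-\epsilon)/(1+\epsilon),(1+\epsilon)/(1-\epsilon)] \subseteq [1-\epsilon',1+\epsilon']$ with $\epsilon' = 2\epsilon/(1-\epsilon)$, which is the corollary's version of \eqref{eqn:approxBernoulli}. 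The posterior bound \eqref{eqn:posteriorratio} then follows verbatim from the last step of the proof of Theorem~\ref{thm:discrete}, since $\Pi_n$ and $\approxPi_n$ share the same conditional distribution of $(\theta,Y)$ given $B$. The main subtlety I anticipate is the bookkeeping of the various prefactors introduced by the reparametrisation and the discretisation: they must collapse into a \emph{single} constant that is simultaneously independent of both $b$ and $\mlalpha$, otherwise the convex-combination argument used to absorb $C$ would not yield a uniform bound.
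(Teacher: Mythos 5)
Your proof is correct and follows essentially the same route as the paper's: the paper likewise interprets the $\betadist(\kappa,\lambda)$ prior as the $\betadist(1/2,1/2)$ posterior after $\kappa-1/2$ fake ones and $\lambda-1/2$ fake zeros, applies Theorem~\ref{thm:discrete} with $L=0$ at effective sample size $n'$, and notes that conditioning two joint distributions that agree to within factors $(1\pm\epsilon)$ yields conditionals within $(1-\epsilon)/(1+\epsilon)$ and $(1+\epsilon)/(1-\epsilon)$. Your explicit reparametrisation identity and the convex-combination step that absorbs the constant $C$ are precisely the unpacked form of that one-line conditioning argument, with $C$ playing the role of the ratio of the two marginal likelihoods of the fake observations.
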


\begin{proof}
  Since the joint distributions on $n'$ observations satisfy
  \eqref{eqn:approxBernoulli}, the corresponding posteriors after
  conditioning these distributions on $\kappa-1/2$ fake ones and
  $\lambda-1/2$ fake zeros must be within factors
  $\frac{1-\epsilon}{1+\epsilon} = 1-\frac{2\epsilon}{1+\epsilon} \geq
  1-\epsilon'$ and $\frac{1+\epsilon}{1-\epsilon} = 1+\epsilon'$.
\end{proof}

\subsection{Which Model Selection Priors Are Spike-and-Slab Priors?}\label{sec:connection}

As described in the introduction, it is clear that spike-and-slab priors
are a special case of model selection priors. However, to the best of
our knowledge, it is not known when a model selection prior has a
spike-and-slab representation. One advantage of the spike-and-slab
formulation is that we can construct algorithms with $O(n^{3/2})$
run time (Section~\ref{sec:lambdabinomial}), while for a general model
selection prior the computational complexity is $O(n^2)$
(Section~\ref{sec:HMM}). In this section we give sufficient and
necessary conditions for when a model selection prior can be expressed
in spike-and-slab form.

To characterize the exact relationship between the
priors we introduce the following notation. For
$\mu=(\mu_0,\mu_1,...,\mu_{m})$ with $m\geq 2n$, define the $(n+1)\times
(n+1)$ Hankel matrix $H_n(\mu)=[\mu_{i+j}]_{i,j=0,...,n}$ and let
$F\mu=(\mu_1,...,\mu_{m})$ denote the projection that drops the first
coordinate. Furthermore, for $A\in \reals^{n\times m}$, let $\range(A)$
be the column space of $A$ and let $A\succeq 0$ denote that $A$ is
positive semi-definite.
\begin{theorem}\label{thm:equivalence:prior}
For odd $n=2k+1$, the model selection prior  (with factorizing $G_\S$) can be given in
the form \eqref{prior:spike-and-slab}  if and only if there exists a $c_n\in[0,\pi_n(n)]$ such that
\begin{align*}
H_k(\mu)\succeq 0, \quad H_k(F\mu)\succeq 0,\quad \text{and}\quad
(\mu_{k+1},\mu_{k+2},...,\mu_{2k+1})^\top \in \range\big(H_k(\mu)\big),
\end{align*}
with $\mu=\Big({n \choose 0}^{-1}\pi_n(0),\ldots, {n \choose
n-1}^{-1}\pi_n(n-1), c_n
\Big)\in[0,1]^{n+1}$.

For even $n=2k$, the model selection prior  (with factorizing $G_\S$) can be given in
the form \eqref{prior:spike-and-slab}  if and only if there exists a $c_n\in[0,\pi_n(n)]$ such that
\begin{align*}
H_k(\mu)\succeq 0, \quad H_{k-1}(F\mu)\succeq 0,\quad \text{and}\quad
(\mu_{k+1},\mu_{k+2},...,\mu_{2k})^\top \in \range\big(H_{k-1}(F\mu)\big),
\end{align*}
with the same $\mu$ as above.
 \end{theorem}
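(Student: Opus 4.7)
The plan is to recognize the theorem as a truncated Stieltjes moment problem in disguise and invoke the Curto--Fialkow flat-extension characterization.

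Because $G_\S$ factorizes in both formulations, equivalence of the two priors reduces to equivalence of the joint law of the binary indicators $B_i=\mathbf{1}\{\theta_i\neq 0\}$. Under the model selection prior every $b\in\{0,1\}^n$ with $s$ ones has probability $\pi_n(s)/\binom{n}{s}$, while under a spike-and-slab prior with hyper-prior $\Lambda_n$ on $\alpha$ it equals $\int_0^1 \alpha^s(1-\alpha)^{n-s}\,d\Lambda_n(\alpha)$. Thus the theorem reduces to characterizing when the finite sequence $\mu_s^{(n)}:=\pi_n(s)/\binom{n}{s}$ admits a representation $\mu_s^{(n)}=\int_0^1\alpha^s(1-\alpha)^{n-s}\,d\Lambda_n(\alpha)$ for some probability measure $\Lambda_n$ on $[0,1]$.

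Next I would strip off a possible atom of $\Lambda_n$ at $\alpha=1$: since $(1-\alpha)^{n-s}$ vanishes there for $s<n$, such an atom affects only the $s=n$ equation, and writing $\Lambda_n=\Lambda_0+(\pi_n(n)-c_n)\delta_1$ with $\Lambda_0$ supported on $[0,1)$ identifies $c_n := \int_{[0,1)}\alpha^n\,d\Lambda_0$ as the free parameter in $[0,\pi_n(n)]$; the total-mass constraint $\Lambda_n([0,1])=1$ is then automatic from $\sum_{s=0}^n \binom{n}{s}\mu_s^{(n)}+(\pi_n(n)-c_n)=1$. I would then apply the change of variables $\beta=\alpha/(1-\alpha)$ on $[0,1)\to[0,\infty)$, under which $\alpha^s(1-\alpha)^{n-s}=\beta^s/(1+\beta)^n$; defining the finite non-negative measure $d\nu(\beta):=(1+\beta)^{-n}\,d\tilde\Lambda_0(\beta)$, with $\tilde\Lambda_0$ the pushforward of $\Lambda_0$ to $[0,\infty)$, the representation becomes exactly $\mu_s=\int_0^\infty\beta^s\,d\nu(\beta)$ for $s=0,\ldots,n$, where $\mu=(\mu_0^{(n)},\ldots,\mu_{n-1}^{(n)},c_n)$ is the vector in the statement. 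The reverse direction is symmetric: any finite non-negative $\nu$ on $[0,\infty)$ realizing these moments pulls back to a valid $\Lambda_0$ on $[0,1)$, because the pulled-back total mass $\int (1+\beta)^n d\nu = \sum_s\binom{n}{s}\mu_s$ is finite, and completing with the atom at $1$ yields a probability measure with the prescribed $\mu_s^{(n)}$.

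The final step is to invoke the Curto--Fialkow characterization of truncated Stieltjes moment sequences on $[0,\infty)$. For odd $n=2k+1$ the vector $\mu$ has even length $2k+2$, and such a sequence admits a representing non-negative measure on $[0,\infty)$ iff $H_k(\mu)\succeq 0$, $H_k(F\mu)\succeq 0$, together with the flat-extension condition $(\mu_{k+1},\ldots,\mu_{2k+1})^\top\in \range(H_k(\mu))$; for even $n=2k$ the analogous statement uses $H_k(\mu)$ and $H_{k-1}(F\mu)$ with the range condition attached to $H_{k-1}(F\mu)$. These are precisely the two halves of the theorem. The main obstacle will be this last step: mere positive semi-definiteness of the Hankel matrices is necessary but, for \emph{truncated} (finitely many moments) problems, not sufficient for a representing measure to exist — the range/flat-extension condition is what supplies sufficiency, so the proof must genuinely cite the Curto--Fialkow truncated-moment theorem rather than the classical full Stieltjes result, and must track carefully that the representing $\nu$ returned by Curto--Fialkow has the finite mass required for the pullback to yield a bona fide probability measure $\Lambda_n$.
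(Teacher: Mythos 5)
Your proposal is correct and follows essentially the same route as the paper: reduce to the de Finetti-type identity $\binom{n}{s}^{-1}\pi_n(s)=\int_0^1\alpha^s(1-\alpha)^{n-s}\,d\Lambda_n(\alpha)$, split off the atom at $\alpha=1$ to introduce the free parameter $c_n$, substitute $u=\alpha/(1-\alpha)$ and absorb $(1+u)^{-n}$ into the measure to obtain a truncated Stieltjes moment problem on $[0,\infty)$, and then invoke its known solvability criterion (the paper cites Theorems~9.35 and~9.36 of Schm\"udgen, which are the same Hankel-positivity plus range conditions you attribute to Curto--Fialkow). Your attention to reversibility and to the finiteness of the pulled-back total mass matches the paper's closing Fubini argument, so there is no substantive difference.
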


The proof, which is given in Appendix~\ref{sec:equivalence:prior}, shows
that establishing this theorem amounts to proving a version of de
Finetti's theorem for finite sequences.

Next we give several examples of priors $\pi_n$ that satisfy (or fail)
the conditions of Theorem~\ref{thm:equivalence:prior}, which implies
that the model selection prior can (or cannot) be given in
spike-and-slab form \eqref{prior:spike-and-slab}. The proofs for the
examples are in Appendix~\ref{sec:examples}.

First we consider binomial $\pi_n$, for which it is already known that
there exists a spike-and-slab representation
\cite[Example~2.1]{castillo2012}. Nevertheless, to illustrate the
applicability of our results, we show that this choice of $\pi_n$
satisfies the conditions of Theorem~\ref{thm:equivalence:prior}.
\begin{example}\label{example:1}
The binomial prior $\pi_n(s)\propto {n \choose s} p^s(1-p)^{n-s}$, $p\in[0,1]$,
satisfies the conditions of Theorem \ref{thm:equivalence:prior} and
therefore the corresponding model selection prior can be given in the
spike-and-slab form \eqref{prior:spike-and-slab} for some appropriate
probability measure $\Lambda_n$ on $[0,1]$.
\end{example}

The next example treats the Poisson prior as a choice for $\pi_n$. To
the best of our knowledge there are no results in the literature that
establish whether the corresponding model selection prior can be given in the spike-and-slab form \eqref{prior:spike-and-slab}.
\begin{example}\label{example:2}
For any $\lambda > 0$, the Poisson prior
$\pi_n(s)\propto e^{-\lambda} \lambda^s/s!$ restricted to $s \in
\{0,1,...,n\}$ satisfies the conditions of Theorem
\ref{thm:equivalence:prior} and therefore the corresponding model
selection prior can be given in the form
\eqref{prior:spike-and-slab} for some appropriate probability measure
$\Lambda_n$ on $[0,1]$.
\end{example}

We proceed to give two natural choices for $\pi_n$ where the
corresponding model selection prior cannot be expressed in the form
\eqref{prior:spike-and-slab}. In the first example, $\pi_n$ has a heavy
(polynomial) tail, while in the second it has a light (sub-exponential)
tail.

\begin{example}\label{example:3}
Let us consider the prior $\pi_n(0)\propto1$, $\pi_n(s)\propto s^{-\lambda}$, $s=1,...,n$, 
for any $\lambda>1$. For $n> 2^{\lambda-1}/(2^{\lambda-1}-1)$ this prior does not satisfy the conditions of Theorem~\ref{thm:equivalence:prior} and therefore the corresponding model selection prior cannot be represented in the form~\eqref{prior:spike-and-slab}.
\end{example}

\begin{example}\label{example:4}
We consider the sub-exponential prior $\pi_n(s)\propto e^{-s^\lambda}$, $s=0,1,...,n$
for any $\lambda>\log_2 (2+\ln 2)$. For $n>c/(c-1)$ with
$c=e^{2^{\lambda}-2}/2>1$ this prior does not satisfy the conditions of
Theorem~\ref{thm:equivalence:prior} and therefore the corresponding model
selection prior cannot be represented in the form~\eqref{prior:spike-and-slab}.
\end{example}

\section{Simulation Study: Reliability of Algorithms}
\label{sec:simulations}

\subsection{Comparing the Proposed Algorithms}
\label{sec:compare:exact}

\begin{figure}[tbp]
  \centering
\begin{minipage}[b]{0.48\textwidth}
    \includegraphics[width=\textwidth]{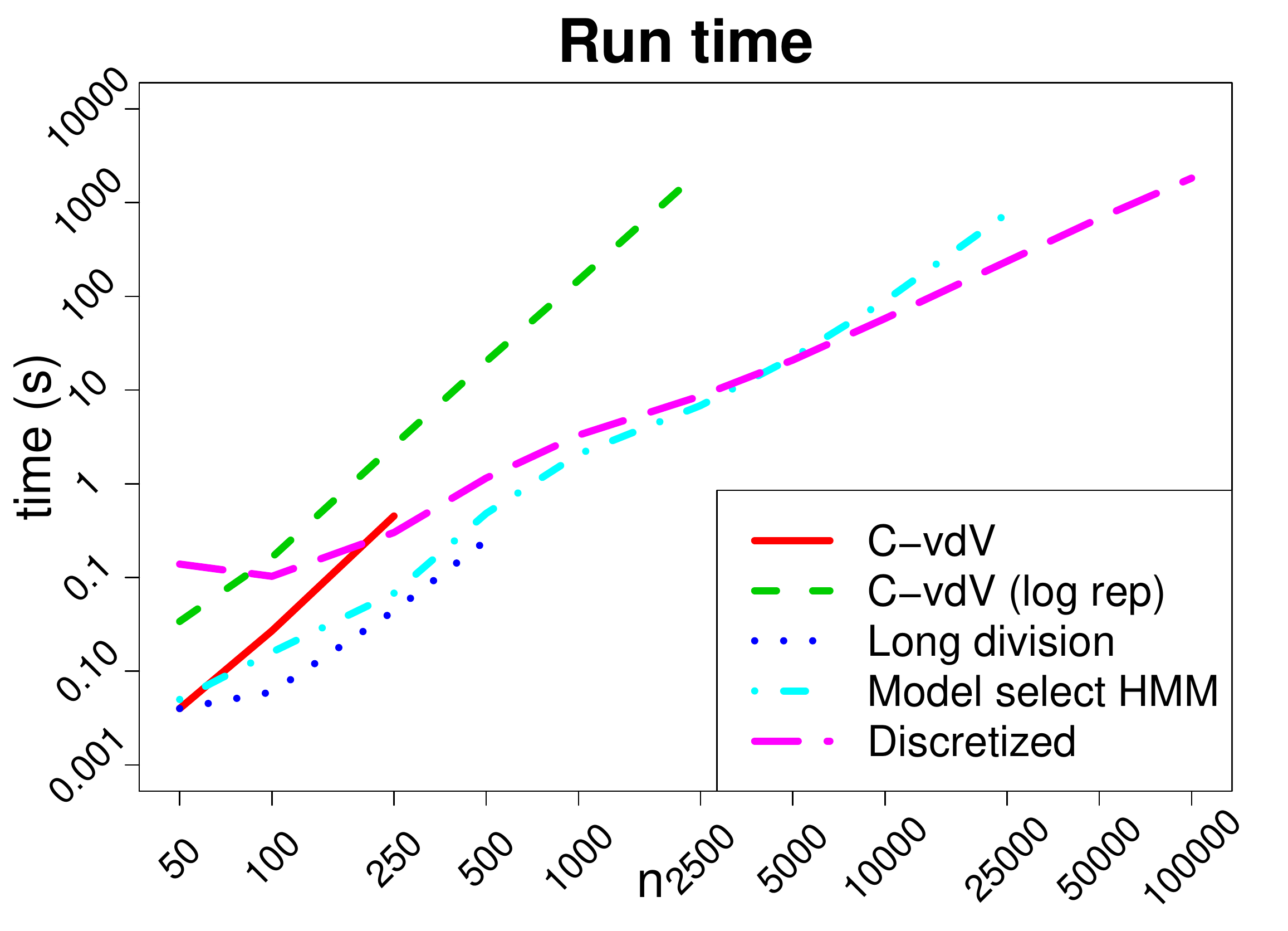}
  \end{minipage}
  \hfill
  \begin{minipage}[b]{0.48\textwidth}
    \includegraphics[width=\textwidth]{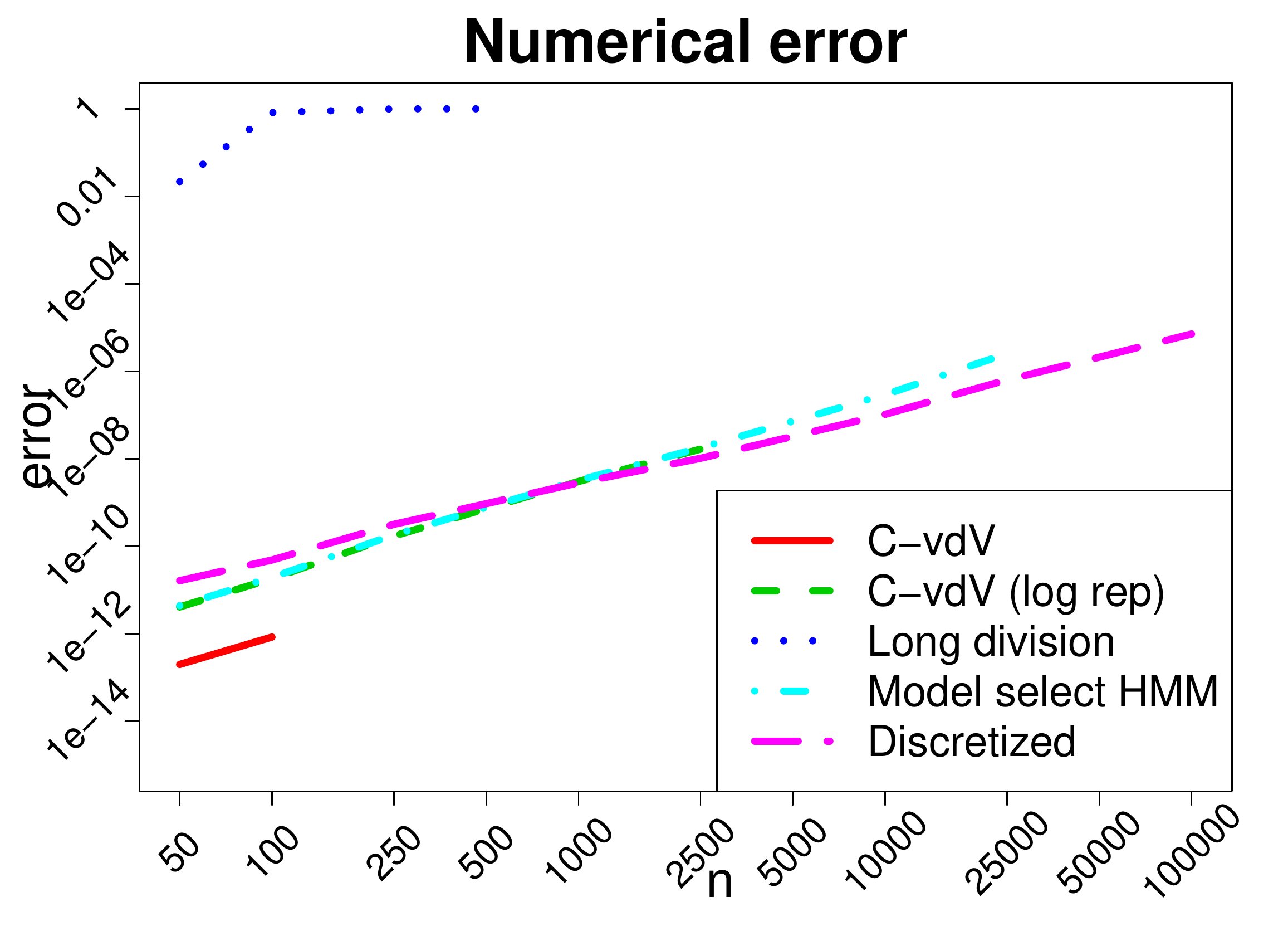}
  \end{minipage}
\caption{Run time and numerical accuracy for the exact algorithms from
Section~\ref{sec:hierarchical} in calculating $q_{n,i}$ for
$i=1,\ldots,n$. A numerical error of $10^{-a}$ means that the algorithm
is able to calculate the mathematically exact answer up to $a$ decimal
places.}
\label{fig:compare:exact:alg}
\end{figure}

In this section we investigate the speed and numerical accuracy of our
new algorithms to the previously proposed methods for exact computation
of the posterior.
We consider a sequence of sample sizes
$n=50,\allowbreak 100,\allowbreak 250,\allowbreak 500,\allowbreak
1\,000,\allowbreak 2\,500,\allowbreak ...,\allowbreak
50\,000,\allowbreak 100\,000$ and construct the true signal
$\theta_0$ to have $20\%$ non-zero signal components of value
$4\sqrt{2\ln n}$, while the rest of the signal coefficients are set to
be zero. For fair comparison we run all algorithms for the
spike-and-slab prior with Laplace slab $g(x) = \frac{a}{2}e^{-a|x|}$, with $a=1$, and
mixing hyper-prior $\Lambda_n=\betadist(1,n+1)$.  We have set up the
experiments in R, but all algorithms were implemented as subroutines in
\Cplusplus{}. Since numerical instability is a major concern, we have tracked the
numerical accuracy of all methods using interval arithmetic as
implemented in the \Cplusplus{} Boost library \cite{boost} (with cr-libm as a
back-end to compute transcendental functions \cite{CRLibm}), which
replaces all floating point numbers by intervals that are guaranteed to
contain the mathematically exact answer. The lower end-point of each
interval corresponds to always rounding down in the calculations, and
the upper end-point corresponds to always rounding up. The width of the
interval for the final answer therefore measures the numerical error.
All experiments were performed on a MacBook Pro laptop with 2.9 GHz
Intel Core i5 processor, 8~GB (1867 MHz DDR3) memory, and a solid-state
hard drive.

\paragraph{Results}

The results are summarized in Figure~\ref{fig:compare:exact:alg}, which shows the run time of the algorithms
on the left, and their numerical error on the right. The reported
numerical error is the maximum numerical error in calculating $q_{n,i}$
over $i = 1,\ldots,n$. To avoid overly long computations we have
terminated the algorithms if they became numerically unstable or if
their run time exceeded half an hour. One can see that the original
Castillo-Van der Vaart algorithm was terminated for $n\geq 250$, which
was due to numerical inaccuracy. This problem was resolved by applying
the logarithmic representation from Appendix~\ref{sec:logarithmic} which
made the algorithm numerically stable up to $n\leq 2500$; however, due
to the long $O(n^3)$ run time the algorithm was terminated for larger
values as it reached the half-hour limit. The natural speed-up idea of
applying long division (see Appendix~\ref{sec:speedupCVdV}) was not
successful for this data as even for small sample sizes the numerical
accuracy was poor. We observe that the model selection HMM and the
algorithm based on discretization performed superior to the preceding
methods: the model selection HMM algorithm has run time $O(n^2)$ and the
largest sample size it managed to complete within half an hour was
$n=25\,000$, while the algorithm with discretized mixing parameter in
the spike-and-slab prior (initialized according to
Corollary~\ref{cor:betaprior} with parameter $m=20$) has run time
$O(n^{3/2})$ and reached the time limit after sample size $n=100\,000$.
We also note that both algorithms were numerically accurate, giving
answers that were reliable up to between 5 and 11 decimal places,
depending on $n$. For sample size $n=2\,500$, we have further verified
empirically that indeed the model selection HMM algorithm computes the
same numbers as the Castillo-Van der Vaart algorithm, as was already
shown in Section~\ref{sec:HMM}.

\subsection{Approximation Errors for Several Standard Methods}\label{sec:compare:approx}

\begin{figure}[tbp]
  \centering
  \begin{minipage}[b]{0.48\textwidth}
    \includegraphics[width=\textwidth]{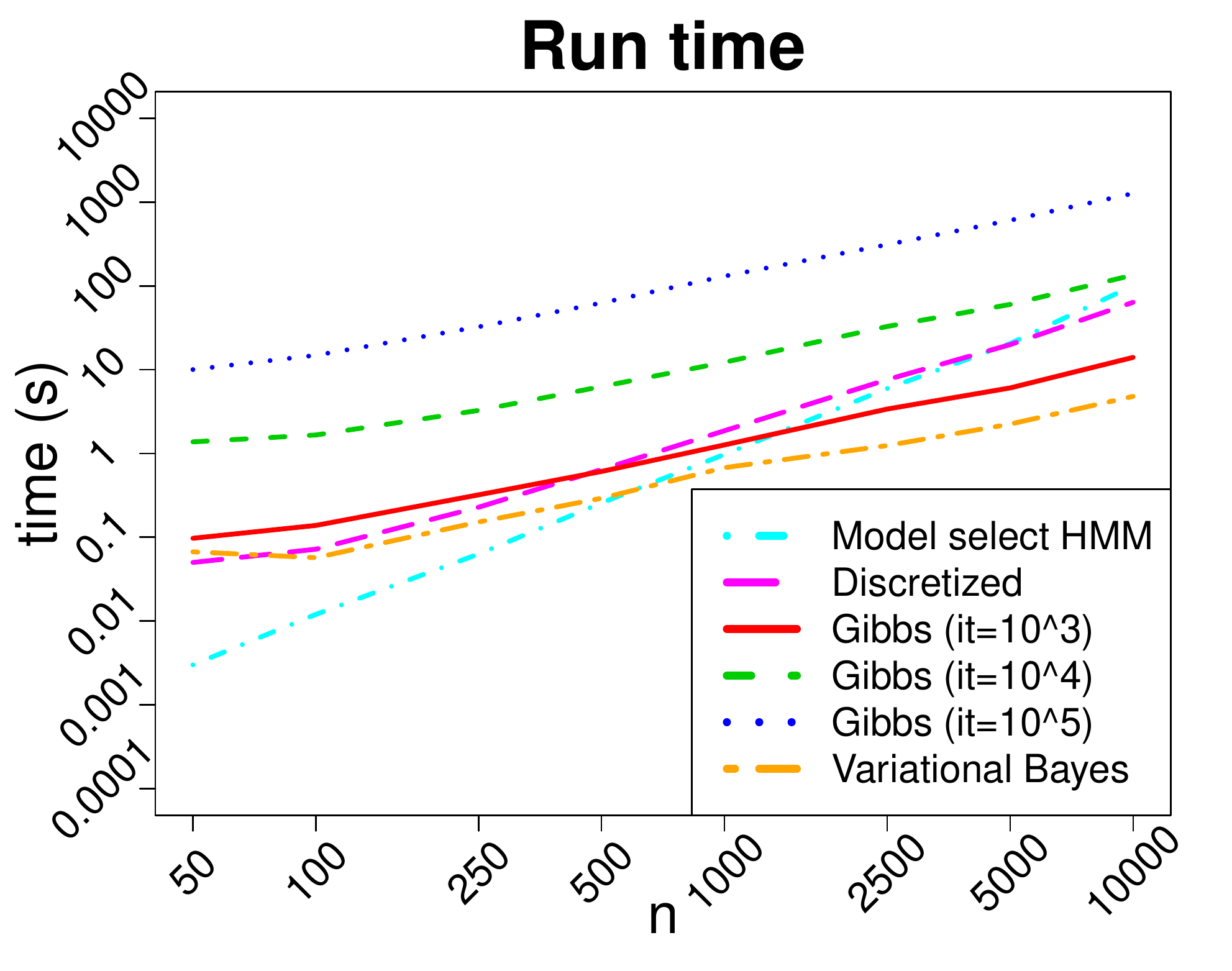}
  \end{minipage}
  \hfill
  \begin{minipage}[b]{0.48\textwidth}
    \includegraphics[width=\textwidth]{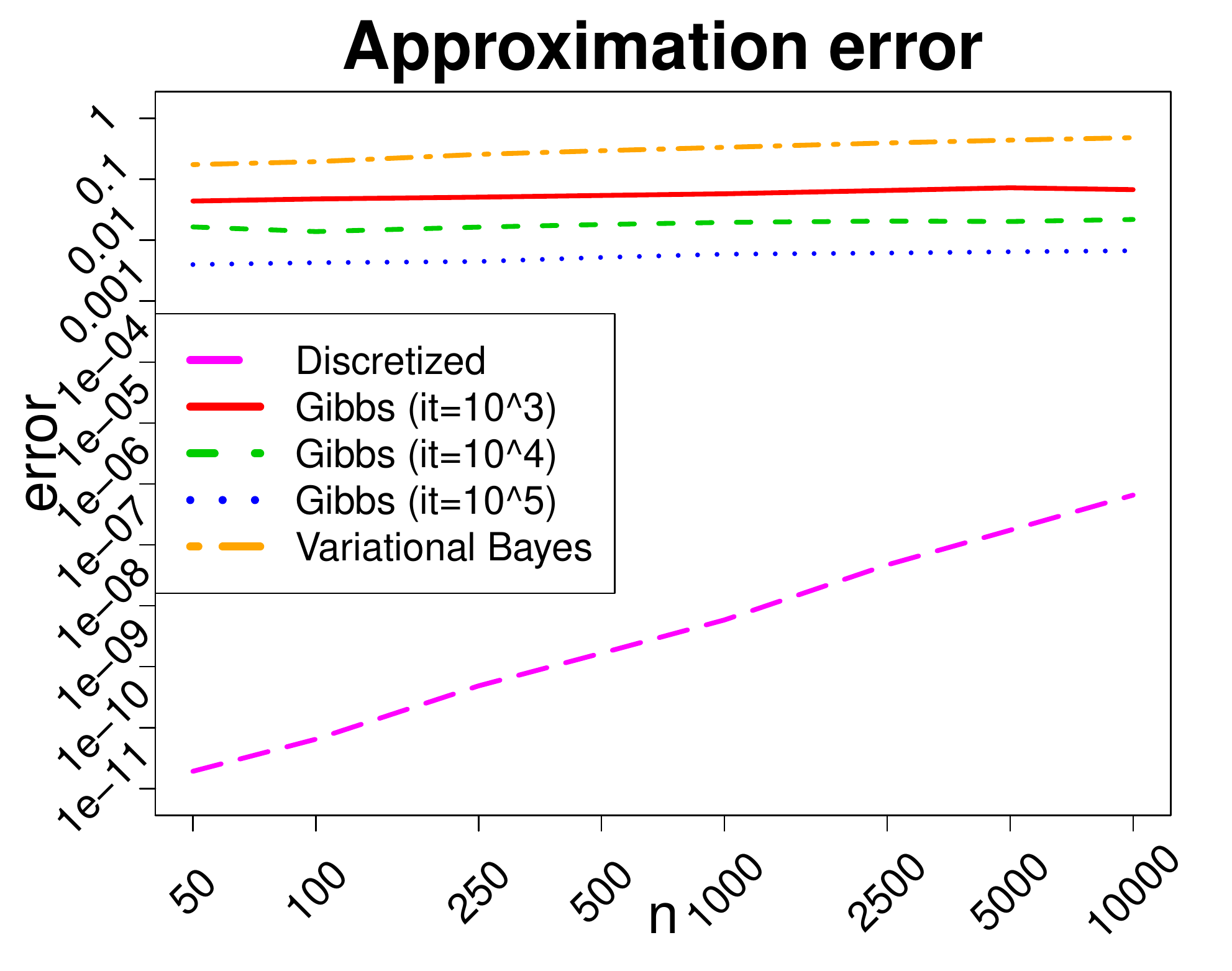}
  \end{minipage}
\caption{Run times and approximation errors for approximate algorithms.
An approximation error of $10^{-a}$ means that the algorithm is able to
calculate the correct answer up to $a$ decimal places.}
\label{fig:compare:approx:alg}
\end{figure}

\begin{table}[tbp]
\centering
\caption{Approximation errors compared to the exact HMM algorithm}
\label{table:approx}
\scriptsize{%
 \setlength{\tabcolsep}{1pt}%
 \renewcommand{\arraystretch}{1.3}%
\begin{tabular}{|l||l|l|l|l|l|l|l|}
 \hline
Method~\textbackslash~n&  100&250&500&1\,000&2\,500&5\,000&10\,000 \\
 \hline
 \hline
Discretized &
  $\mathbf{6.37\times 10^{-11}}$
 &$\mathbf{4.89\times 10^{-10}}$
 &$\mathbf{1.67\times 10^{-9}}$
 &$\mathbf{5.89\times 10^{-9}}$
 &$\mathbf{4.69\times 10^{-8}}$
 &$\mathbf{1.74\times 10^{-7}}$
 &$\mathbf{6.56\times 10^{-7}}$\\
 \hline
Gibbs  ($\text{it}=10^3$) & $4.58\times 10^{-2}$ & $4.76\times 10^{-2}$ & $ 5.03\times 10^{-2}$
& $5.41\times 10^{-2}$& $6.15\times 10^{-2}$ & $6.28\times 10^{-2}$& $6.96\times 10^{-2}$\\
 \hline
Gibbs ($\text{it}=10^4$)&$1.23\times 10^{-2}$ & $1.46\times 10^{-2}$& $1.63\times 10^{-2}$& $1.75\times 10^{-2}$ & $2.03\times 10^{-2}$& $2.07\times 10^{-2}$&$2.25\times 10^{-2}$\\
 \hline
Gibbs ($\text{it}=10^5$)   & $4.55\times 10^{-3}$& $5.05\times 10^{-3}$& $5.46\times 10^{-3}$&$5.63\times 10^{-3}$ &$5.71\times 10^{-3}$&$6.86\times 10^{-3}$&$6.88\times 10^{-3}$\\
 \hline
Variational Bayes  & $1.90 \times 10^{-1}$& $2.48\times 10^{-1}$& $2.93\times 10^{-1}$&$3.33\times 10^{-1}$&$3.91\times 10^{-1}$&$4.40\times 10^{-1}$&$4.81\times 10^{-1}$\\
 \hline
\end{tabular}}
\end{table}

In this section we measure the approximation error of a selection of
approximation algorithms by comparing them to the exact model selection
HMM algorithm, which serves as a benchmark for the correct answer. We
again consider the spike-and-slab prior with
$\Lambda_n=\betadist(1,n+1)$, but for simplicity we use standard
Gaussian slabs $g(x) = \frac{1}{\sqrt{2\pi}} e^{-x^2}$, since the
approximation methods are typically designed for this choice of slab
distribution. Our first approximation method is the discretization
algorithm from Section~\ref{sec:discrete}, which uses a deterministic
approximation. The discretization algorithm was again initialized
according to Corollary~\ref{cor:betaprior} with $m=20$. We further
consider a standard Gibbs sampler (with number of iterations
$\text{it}=10^3,10^4,10^5$, half of which are used as burn-in) and a
variational Bayes approximation. We consider the same test data as in
the preceding section. The only difference is that we stop at
$n=10\,000$ to limit the run times for the exact HMM algorithm
and the Gibbs sampler with $\text{it=}10^5$. Both the Gibbs sampler and
variational Bayes algorithm were implemented in R. For the latter we
used the component-wise variational Bayes algorithm
\cite{logsdon2010variational,titsias:2011,carbonetto:2012,ray:szabo:2019}. We
measure approximation error by computing $\max_i |q_{n,i} -
\tilde{q}_{n,i}|$, where $q_{n,i}$ is the exact slab probability
computed by the model selection HMM and $\tilde{q}_{n,i}$ is the slab
probability computed by the approximation. We run each non-deterministic
approximation method 5 times and report the average approximation error
along with the average run time of the algorithms. The results are
plotted in Figure~\ref{fig:compare:approx:alg} and shown numerically in
Table~\ref{table:approx}. 

One can see that the discretized version of the algorithm is very
accurate, with at least seven decimal places of precision throughout. It
approximately loses two decimal places of precision for every ten-fold
increase of $n$, so we can still expect it to be accurate up to five
decimal places for $n=100\,000$. We point out that its approximation
error includes both the mathematical approximation from
Section~\ref{sec:discrete} and the numerical error already studied
separately in Figure~\ref{fig:compare:exact:alg}. Since the
approximation error in Figure~\ref{fig:compare:approx:alg} is of the
same order as the numerical error in Figure~\ref{fig:compare:exact:alg},
we conclude that the numerical error dominates the mathematical
approximation error, so the discretization algorithm may be considered
an exact method for all practical purposes.

At the same time the Gibbs sampler and the variational Bayes method both
provide approximations of the posterior that are far less accurate.
Variational Bayes is only accurate up to one decimal place, although in
further investigations we did find that it provides a better
approximation if we look only at the non-zero coefficients, with an
approximation error of order $O(10^{-4})$. For the Gibbs sampler there
is no theory that tells us how many iterations we have to take to
achieve a certain degree of accuracy. We see here that the precision
strongly depends on the number of iterations, ranging from one to three
decimal places, but remains approximately constant with increasing $n$.
However, the run time for $\text{it}=10^5$ iterations would become
prohibitive for sample sizes much larger than the $n=10\,000$ we
consider.

\section{Differential Gene Expression for Ulcerative Colitis and Crohn's
Disease}
\label{sec:realworld}

In this section we compare our methods to various other frequently used
Bayesian approaches in the context of differential gene expression data.

\paragraph{Data}

We consider a data set from Burczynski et al.\ \cite{Burczynski:2006}
containing the gene expression levels of $n=22\,283$ genes in peripheral
blood mononuclear cells, with the raw data provided by the National
Center for Biotechnology Information.\footnote{Via the Gene Expression
Omnibus (GEO) website under dataset record number GDS1615. See
\href{http://www.ncbi.nlm.nih.gov/sites/GDSbrowser?acc=GDS1615}{www.ncbi.nlm.nih.gov/sites/GDSbrowser?acc=GDS1615}.} This is an
observational study, with microarray gene expression data on 26 subjects
who suffered from ulcerative colitis and 59 subjects with Crohn's
disease. We calculate Z-scores to identify
differences in average gene expression levels between the two disease
groups following the standard approach described by Quackenbush
\cite{quackenbush:2002}, which consists of dividing the difference of
the average log-transformed, normalized gene expressions for the two
groups by the standard error. More specifically, let us denote by
$U_{i,j}$ and $C_{i,j}$ the measured intensities of the $i$-th 
gene and $j$-th person with ulcerative colitis and Crohn's disease,
respectively. As a first step we normalize the intensities for each
patient, i.e.\ we take  $U_{i,j}'=U_{i,j}/\sum_{i} U_{i,j}$ and
$C_{i,j}'=C_{i,j}/\sum_{i} C_{i,j}$ for each gene $i$ and
patient $j$. Then the Z-score for the $i$-th gene is
computed as
\begin{align*}
Z_i=\frac{\overline{\log U_i'}-\overline{\log
C_i'}}{\sqrt{\sigma^2_{U',i}/26+\sigma^2_{C',i}/59}}
\qquad \text{for $i=1,\ldots, 22\,283$,}
\end{align*}
where 
\begin{align*}
\overline{\log U_i'}=\frac{1}{26}\sum_{j=1}^{26} \log U_{i,j}',
\quad 
\sigma^2_{U',i}=\frac{1}{25}\sum_{j=1}^{26}\big(\log U_{i,j}'-
\overline{\log U_i'} \big)^2,
\end{align*}
and $\overline{\log C_i'}$ and $\sigma^2_{C',i}$ are defined
accordingly.  Since it is assumed that the number of genes with a
different expression level between the two groups is small compared to
the total number of genes $n$, the data fit into the sparse normal
sequence model with $n=22\,283$.

\paragraph{Methods}

\begin{table}
  \centering
  \caption{Run time and number of selected genes on gene expression
  data. The reported run times for Empirical Bayes EBSparse
  and the Horseshoe are the averages over their runs.}
  \label{tab:RealWorld}
  \scriptsize{%
  \begin{tabular}{|l|r|r|}
    \hline
    Method & Run Time & Nr.\ of Genes Selected\\
    \hline
    \hline
    Variational Bayes  (varbvs)              &  20.95 minutes &   166\\
    Spike-and-Slab LASSO            &  0.01 seconds &  557\\
    Horseshoe (10 runs)              &  1.86 minutes & 571--583\\
    Empirical Bayes EBSparse (10 runs)          & 7.28 seconds & 592--604\\
    Discretized: $\betadist(1,n+1)$-binomial prior    &   24.47 seconds &   674\\
    HMM: $\betadist(1,n+1)$-binomial prior             &  2.06 minutes &  674\\ 
    Empirical Bayes JS               &  0.03 seconds & 3168\\
    HMM: $\betadist(1,1)$-binomial prior             &  2.00 minutes &  3169\\
    \hline
  \end{tabular}}
\end{table}
We compare the run times and the selected genes for the eight procedures
listed in Table~\ref{tab:RealWorld}. We consider the model selection HMM
algorithm for the $\betadist(1,n+1)$-binomial prior with Laplace slab
(with hyper-parameter $a=0.5$), and the
discretization algorithm from Corollary~\ref{cor:betaprior} with $m=20$,
which is a faster way to compute exactly the same results. Genes~$i$
with marginal posterior probability $q_{n,i} \geq 1/2$ are selected. For
comparison, we also consider the model selection HMM for the
$\betadist(1,1)$-binomial prior, which corresponds to using a uniform
prior $\Lambda_n$ on the mixing parameter $\alpha$. In this section, we
used the implementations of our algorithms from our R package
\cite{VanErvenDeRooijSzabo2019}, which is approximately 5 times faster
than the implementation from Section~\ref{sec:simulations}, because it
does not incur the overhead of tracking numerical accuracy using
interval arithmetic.

We compare to the empirical Bayes method of Johnstone and Silverman
\cite{johnstone2004}, which uses a spike-and-slab prior, but estimates
the mixing parameter $\alpha$ using empirical Bayes. The method does not
explicitly include a prior on $\alpha$, but we may interpret it as using
a uniform prior~$\Lambda_n$. We again use a Laplace slab (with the
default parameter $a=0.5$)  and select genes by hard thresholding at
marginal posterior probability $1/2$, as implemented in the R package
\cite{ebayesthresh:2017}.

We also include EBSparse, which is a fractional empirical Bayes
procedure proposed by Martin and Walker \cite{martin:2014}. It can be
interpreted as using a spike-and-slab prior with $\Lambda_n =
\betadist(1, \gamma n)$, but with Gaussian slabs $G_i =
\normaldist(Y_i,\tau^2)$ whose means depend on the data. Furthermore, in
the formula for the posterior the likelihood is tempered by raising it
to the power $\kappa$. We use the authors' R implementation
\cite{martin:EBSparse:2020}, with the recommended hyper-parameter
settings $\kappa=0.99$, $\gamma=0.25$, $\tau^2=100$, and $M=1000$ Monte
Carlo samples. As the sampler is
randomized, we run the algorithm $10$ times.

We further consider the Spike-and-Slab LASSO of Ro{\v{c}}kov{\'a}
\cite{rovckova:2018}, which computes the maximum a posteriori parameters
using Laplace distributions both for the spikes and for the slabs. As in
\cite[Section~6]{rovckova:2018}, we take the slab scale parameter to be
$\lambda_1 = 0.1$, and estimate the spike scale parameter $\lambda_0$
via the two-step procedure described there, for the $\betadist(1,n+1)$
hyper-prior on the mixing parameter. An R implementation called SSLasso
was provided by Ro{\v{c}}kov{\'a} \cite{rovckova:SSLsoftware:2018}.

We also add the Horseshoe estimator \cite{carvalho:2010} with the
Cauchy hyper-prior on its hyper-parameter $\tau$, truncated to the
interval $[1/n,1]$, as recommended by Van der Pas et al.\
\cite{pas:etal:2017}. We use the R package \cite{pas:et:al:2016}, with
its default Markov Chain Monte Carlo sampler settings of $1000$
iterations burn-in and $5000$ iterations after burn-in. Genes are
selected if their credible sets exclude zero \cite{pas:etal:2017}. As
the sampler is randomized, we run the algorithm $10$ times.

Finally, we compare with the variational Bayes algorithm (varbvs R-package)
described in \cite{carbonetto:2012}. Notably, this method uses Gaussian
slabs. The hyper-parameters (e.g.\ the variance of the prior and the
noise) are automatically fitted to the data. We set the tolerance and
maximum number of iterations to be $10^{-4}$ and $1000$, respectively. 

\begin{figure}[htb]
  \begin{minipage}[b]{0.9\textwidth}
    \includegraphics[width=\textwidth]{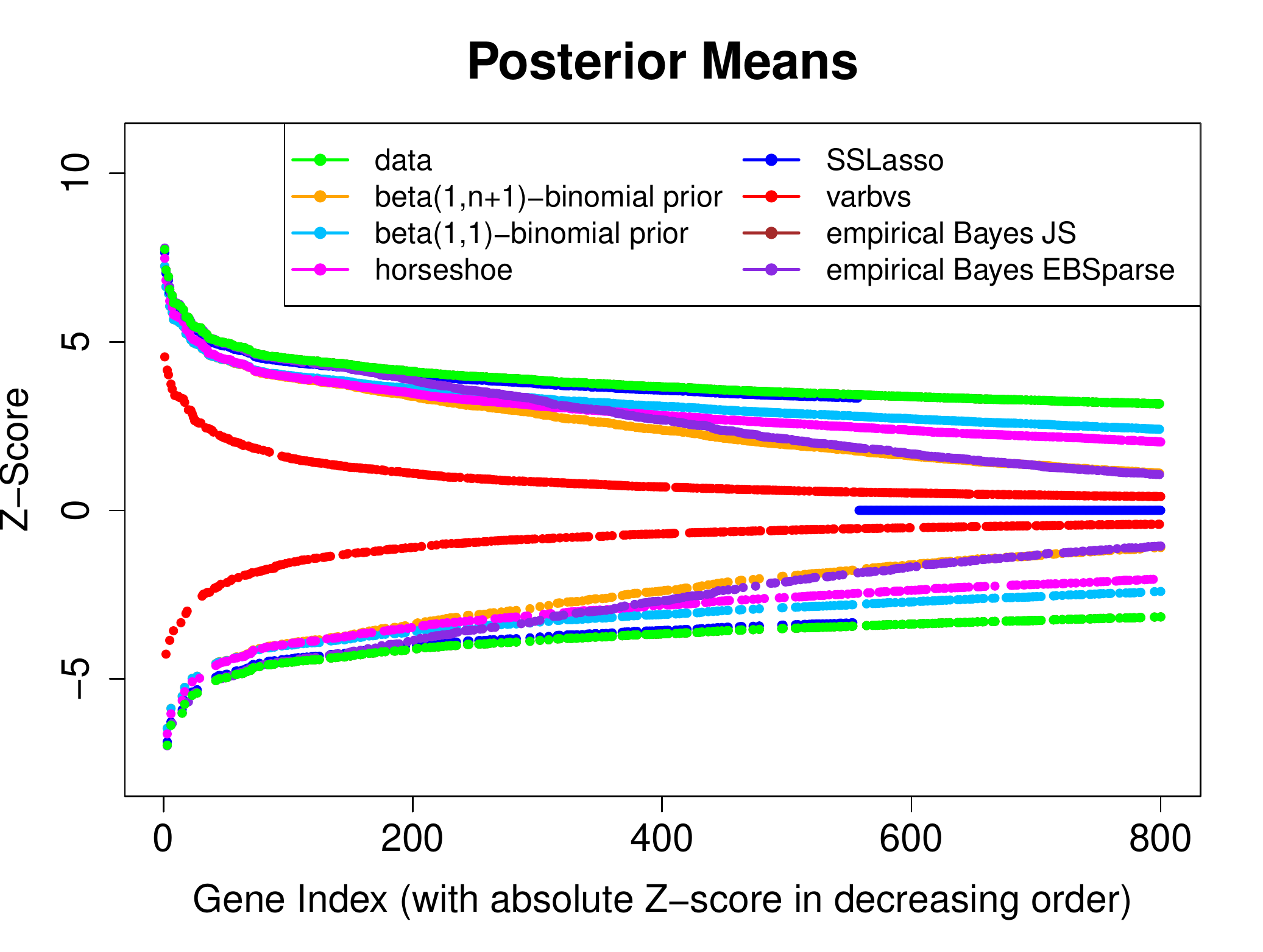}
  \end{minipage}
\caption{Posterior means/MAP estimates for the $800$ genes in the gene
expression data with largest Z-scores (in absolute value)}
\label{fig:real_world}
\end{figure}

\paragraph{Results}

Results are reported in Table~\ref{tab:RealWorld} and
Figure~\ref{fig:real_world}. Although we list run times to illustrate
computational feasibility, it is important to keep in mind that the
methods in this section compute different quantities, so their most
important difference lies in which genes they select. On this point, the
main conclusion is that the alternative methods give very different
results from using the exact Bayesian posterior for the model selection
prior.

All methods except the Horseshoe and EBSparse select genes in decreasing
order of the absolute values of $Z_i$. Genes are generally selected by
the Horseshoe and EBSparse in decreasing order of absolute value of
$Z_i$ as well, but with some swaps for genes for which the absolute
values are close to each other, so it appears that for all
sampling-based methods the sampler is suffering from limited precision,
as we also observed for the Gibbs samplers in
Section~\ref{sec:compare:approx}. The methods can be divided into three
main categories based on the number of genes they select: on one extreme
is the variational Bayes (varbvs) method, which provides the sparsest
solution; then the majority of methods select a number of genes between
$557$ and $674$; and finally at the other extreme are the Empirical
Bayes JS procedure and the $\betadist(1,1)$-binomial prior, which are
both based on the same prior and both select a very large number of
genes, making these two methods the most conservative. The lack of
sparsity induced by the $\betadist(1,1)$-binomial prior is perhaps not
surprising, given that it does not satisfy the exponential decrease
condition of \cite{castillo2012}. We study this further in
Section~\ref{sec:asymptotics}, where we compare different choices for
the hyper-parameters of the beta-binomial prior in simulations.

We further see that the Spike-and-Slab LASSO and the empirical Bayes JS 
procedures finish almost instantly. The EBSparse method takes several
seconds to run, as does the discretization algorithm. The Horseshoe and our exact model
selection HMM take approximately two minutes to run, while the variational Bayes
varbvs method requires a little over 20 minutes. Nevertheless, all methods are feasible even
for practitioners who would like to perform multiple similar
experiments, for example with different variations of the prior or slab
distributions. By contrast, we do not include the Castillo-Van der Vaart
algorithm with logarithmic representation, because based on
extrapolation of Figure~\ref{fig:compare:exact:alg} we expect it to take
around 20 days.

In Figure~\ref{fig:real_world} we also plot the posterior means (or, in
case of the Spike-and-Slab LASSO, the MAP estimator) and the $800$
largest Z-scores in absolute value. Since the posterior means for the
model selection HMM and the discretization algorithm are the same, we
label both as $\betadist(1,n+1)$-binomial in reference to the prior that
was used. We further note that the empirical Bayes JS estimates are
invisible behind the data points. We observe that the varbvs method
induces the heaviest shrinkage, followed first by the
$\betadist(1,n+1)$-binomial prior and the Empirical Bayes EBSparse
method, and then by the Horseshoe and the $\betadist(1,1)$-binomial
prior. The least shrinkage is applied by the empirical Bayes JS method,
which does not shrink the observed Z-scores very much (if at all). The
Spike-and-Slab LASSO is in a category of its own, because it is a MAP
estimator. It applies no shrinkage to the coefficients that are
selected, and sets all other coefficients to zero.

\section{Asymptotics of Spike-and-Slab Priors}
\label{sec:asymptotics}

The choice of the prior $\Lambda_n$ on the mixing hyper-parameter
$\alpha$ in spike-and-slab priors is considered to be highly relevant
for the behavior of the posterior. Castillo and Van der Vaart
\cite{castillo2012} recommend to use $\Lambda_n =
\betadist(\kappa,\lambda)$ with parameters $\kappa=1$ and $\lambda=n+1$.
This prior induces heavy penalization for dense models (models with
large sparsity parameter~$s$) and was shown to have optimal theoretical
properties. However, it is unknown whether such heavy penalization is
indeed necessary and whether even heavier penalization will result in
suboptimal behavior.

In this section we investigate the asymptotic behavior of the posterior
for different choices of the hyper-parameters $\kappa$ and $\lambda$
using our new exact algorithms, which can scale up to large sample
sizes. We consider: i)~the uniform prior with
$\kappa=1$ and $\lambda=1$, which is often considered a natural choice
\cite{scott:2010}; ii)~mild shrinkage, $\kappa=1$ and
$\lambda=\sqrt{n}$; iii)~the choice $\kappa=1$ and $\lambda=n+1$
recommended by Castillo and Van der Vaart; iv)~heavy shrinkage, $\kappa=1$ and $\lambda=n^2$ ;
and finally v) a sparsity-discouraging choice, $\kappa=n$ and
$\lambda=1$. We consider two experiments: $A_1$ and $A_2$. In both cases
the sample sizes range from $n=50$ to
$n=20\,000$. In Experiment $A_1$ we set the true sparsity level to
$s=10$ and consider uniformly distributed non-zero signal coefficients
between 1~and~10, i.e.\ $\theta_i\sim U(1,10)$ for $i\in \S$. In
Experiment $A_2$ the  true sparsity level is taken to be $s=\lceil
n^{1/3}\rceil$ and
the non-zero signal coefficients are set to $\theta_i=2\sqrt{2\log n}$
for $i\in \S$, which is a factor of $2$ above the detection threshold. In
Appendix~\ref{sec:simulation:extra} of the supplementary material we consider an
additional experiment $A_3$ that is similar to $A_1$ but with $s=25$ and
$\theta_i \sim U(5,10)$ for $i \in \S$, which gives similar results as
Experiment $A_1$.

\begin{table}[tbp]
\centering
\caption{$\ell_2$ distance of the posterior mean from the true signal in Experiment $A_1$}
\label{table:l2_1}
\scriptsize{%
 \setlength{\tabcolsep}{1pt}%
 \renewcommand{\arraystretch}{1.3}%
\begin{tabular}{|l||l|l|l|l|l|l|l|l|l|}
 \hline
Method~\textbackslash~n&  50&100&200&500&1\,000&2\,000&5\,000&10\,000&20\,000 \\
 \hline
 \hline
i) $\kappa=1$, $\lambda=1$ &4.52\,(0.64)&4.65\,(0.68)&4.84\,(0.75)&5.30\,(1.01)&5.18\,(0.74)&5.63\,(0.90)&5.95\,(0.77)&6.65\,(1.26)&6.38\,(0.98)\\
 \hline
ii) $\kappa=1$, $\lambda=\sqrt{n}$ &4.28\,(0.63)&4.46\,(0.69)&4.69\,(0.75)&5.20\,(1.01)&\textbf{5.13}\,(0.73)&5.57\,(0.88)&\textbf{5.92}\,(0.78)&6.62\,(1.27)&\textbf{6.36}\,(0.98)\\
 \hline
iii)  $\kappa=1$, $\lambda=n+1$&\textbf{4.09}\,(0.68) &\textbf{4.20}\,(0.82)&\textbf{4.43}\,(0.88)&\textbf{4.95}\,(1.11)&5.14\,(0.89)&\textbf{5.45}\,(0.99)&6.18\,(0.94)&\textbf{6.42}\,(1.67)&6.43\,(1.06)\\
 \hline
iv)  $\kappa=1$, $\lambda=n^2$  &5.04\,(1.05)&5.21\,(1.42)&6.55\,(1.48)&7.09\,(1.93)&7.87\,(1.21)&8.15\,(1.42)&9.45\,(1.55)&10.35\,(2.24)&10.43\,(2.16)\\
 \hline
v) $\kappa=n$, $\lambda=1$ & 5.71\,(0.59)&7.60\,(0.62)&10.63\,(0.54)&16.89\,(0.61)&23.39\,(0.61)&33.43\,(0.78)&52.93\,(0.52)&74.79\,(0.57)&105.84\,(0.55)\\
 \hline
\end{tabular}}
\end{table}

\begin{table}[tbp]
\centering
\caption{False discovery rate  in Experiment $A_1$}
\label{table:fdr1}
\scriptsize{%
 \setlength{\tabcolsep}{1pt}%
 \renewcommand{\arraystretch}{1.3}%
\begin{tabular}{|l||l|l|l|l|l|l|l|l|l|}
 \hline
Method~\textbackslash~n&  50&100&200&500&1\,000&2\,000&5\,000&10\,000&20\,000 \\
 \hline
 \hline
i) $\kappa=1$, $\lambda=1$ &0.36\,(0.14)&0.23\,(0.13)&0.19\,(0.11)&0.13\,(0.12)&0.05\,(0.09)&0.09\,(0.12)&0.08\,(0.09)&0.12\,(0.10)&0.08\,(0.09)\\
 \hline
ii) $\kappa=1$, $\lambda=\sqrt{n}$ &0.16\,(0.10)&0.16\,(0.11)&0.17\,(0.11)&0.11\,(0.11)&0.06\,(0.09)&0.09\,(0.12)&0.08\,(0.08)&0.11\,(0.10)&0.08\,(0.09)\\
 \hline
iii)  $\kappa=1$, $\lambda=n+1$&0.05\,(0.06) &0.03\,(0.05)&0.02\,(0.04)&0.02\,(0.04)&0.01\,(0.03)&0.01\,(0.02)&0.00\,(0.00)&0.01\,(0.03)&0.00\,(0.00)\\
 \hline
iv)  $\kappa=1$, $\lambda=n^2$  &{0.00}\,(0.00)&{0.00}\,(0.00)&{0.00}\,(0.00)&{0.00}\,(0.00)&{0.00}\,(0.00)&{0.00}\,(0.00)&{0.00}\,(0.00)&{0.00}\,(0.00)&{0.00}\,(0.00)\\
 \hline
v) $\kappa=n$, $\lambda=1$ & 0.80\,(0.00)&0.90\,(0.00)&0.95\,(0.00)&0.98\,(0.00)&0.99\,(0.00)&1.00\,(0.00)&1.00\,(0.00)&1.00\,(0.00)&1.00\,(0.00)\\
 \hline
\end{tabular}}
\end{table}

\begin{table}[tbp]
\centering
\caption{True positive rate  in Experiment $A_1$}
\label{table:tpr1}
\scriptsize{%
 \setlength{\tabcolsep}{1pt}%
 \renewcommand{\arraystretch}{1.3}%
\begin{tabular}{|l||l|l|l|l|l|l|l|l|l|}
 \hline
Method~\textbackslash~n&  50&100&200&500&1\,000&2\,000&5\,000&10\,000&20\,000 \\
 \hline
 \hline
i) $\kappa=1$, $\lambda=1$&0.91\,(0.11) &0.83\,(0.15)&0.86\,(0.08)&0.80\,(0.17)&0.76\,(0.13)&0.73\,(0.17)&0.73\,(0.13)&0.71\,(0.14)&0.66\,(0.16)\\
 \hline
ii) $\kappa=1$, $\lambda=\sqrt{n}$ &0.89\,(0.11)&0.83\,(0.14) &0.85\,(0.09)&0.78\,(0.17)&0.75\,(0.13)&0.73\,(0.17)&0.72\,(0.13)&0.71\,(0.15)&0.65\,(0.16)\\
 \hline
iii)  $\kappa=1$, $\lambda=n+1$&0.84\,(0.12) &0.76\,(0.11)&0.80\,(0.10)&0.74\,(0.18)&0.70\,(0.12)&0.68\,(0.18)&0.64\,(0.12)&0.64\,(0.19)&0.59\,(0.15)\\
 \hline
iv)  $\kappa=1$, $\lambda=n^2$  &0.71\,(0.14)&0.62\,(0.14)&0.59\,(0.16)&0.57\,(0.17)&0.52\,(0.13)&0.48\,(0.20)&0.44\,(0.15)&0.40\,(0.17)&0.38\,(0.16)\\
 \hline
v) $\kappa=n$, $\lambda=1$ &{1.00}\,(0.00) &{1.00}\,(0.00)&{1.00}\,(0.00)&{1.00}\,(0.00)&{1.00}\,(0.00)&{1.00}\,(0.00)&{1.00}\,(0.00)&{1.00}\,(0.00)&{1.00}\,(0.00)\\
 \hline
\end{tabular}}
\end{table}

We repeat each experiment 20 times and report the average $\ell_2$-error
between the posterior mean for $\theta$ and the true signal $\theta$ in
Table~\ref{table:l2_1} and Table~\ref{table:l2_3} for Experiments $A_1$
and $A_2$, respectively. In Tables~\ref{table:fdr1}, \ref{table:tpr1},
\ref{table:fdr3} and \ref{table:tpr3} we also report the average false
discovery rates and the average true positive rates. Standard deviations
are provided in parentheses in all cases.

\begin{table}[tbp]
\centering
\caption{$\ell_2$ distance of the posterior mean from the true signal in Experiment $A_2$}
\label{table:l2_3}
\scriptsize{%
 \setlength{\tabcolsep}{1pt}%
 \renewcommand{\arraystretch}{1.3}%
\begin{tabular}{|l||l|l|l|l|l|l|l|l|l|}
 \hline
Method~\textbackslash~n&  50&100&200&500&1\,000&2\,000&5\,000&10\,000&20\,000 \\
 \hline
 \hline
i) $\kappa=1$, $\lambda=1$ &3.03\,(0.98)&3.77\,(0.70)&3.9\,(0.58)&4.48\,(0.96)&4.68\,(0.51)&5.65\,(0.84)&7.13\,(0.81)&7.70\,(1.21)&8.68\,(0.82)\\
 \hline
ii) $\kappa=1$, $\lambda=\sqrt{n}$ &2.79\,(0.91)&3.56\,(0.68)&3.80\,(0.59)&4.32\,(0.93)&4.55\,(0.51)&5.53\,(0.83)&7.02\,(0.81)&7.61\,(2.15)&8.59\,(1.21)\\
 \hline
iii)  $\kappa=1$, $\lambda=n+1$&\textbf{2.41}\,(0.88) &\textbf{3.01}\,(0.68)&\textbf{3.07}\,(0.74)&\textbf{3.48}\,(0.79)&\textbf{3.64}\,(0.62)&\textbf{4.27}\,(0.63)&\textbf{5.33}\,(0.78)&\textbf{5.68}\,(0.94)&\textbf{6.32}\,(0.77)\\
 \hline
iv)  $\kappa=1$, $\lambda=n^2$  &3.43\,(1.82)&3.63\,(2.06)&3.89\,(1.86)&5.13\,(2.51)&4.49\,(1.90)&5.08\,(2.27)&5.62\,(1.42)&6.27\,(2.15)&6.42\,(1.70)\\
 \hline
v) $\kappa=n$, $\lambda=1$ & 5.31\,(0.84)&7.69\,(0.48)&10.67\,(0.52)&16.81\,(0.71)&23.56\,(0.56)&33.59\,(0.50)&53.00\,(0.44&74.92\,(0.53)&105.8\,(0.54)\\
 \hline
\end{tabular}}
\end{table}

\begin{table}[tbp]
\centering
\caption{False discovery rate  in Experiment $A_2$}
\label{table:fdr3}
\scriptsize{%
 \setlength{\tabcolsep}{1pt}%
 \renewcommand{\arraystretch}{1.3}%
\begin{tabular}{|l||l|l|l|l|l|l|l|l|l|}
 \hline
Method~\textbackslash~n&  50&100&200&500&1\,000&2\,000&5\,000&10\,000&20\,000 \\
 \hline
 \hline
i) $\kappa=1$, $\lambda=1$ &0.25\,(0.22)&0.22\,(0.17)&0.15\,(0.12)&0.12\,(0.09)&0.10\,(0.08)&0.11\,(0.08)&0.11\,(0.06)&0.12\,(0.06)&0.10\,(0.03)\\
 \hline
ii) $\kappa=1$, $\lambda=\sqrt{n}$ &0.15\,(0.16)&0.18\,(0.14)&0.14\,(0.11)&0.11\,(0.09)&0.08\,(0.08)&0.10\,(0.08)&0.10\,(0.06)&0.11\,(0.06)&0.10\,(0.03)\\
 \hline
iii)  $\kappa=1$, $\lambda=n+1$&0.01\,(0.04) &0.06\,(0.08)&0.03\,(0.07)&0.04\,(0.06)&0.01\,(0.03)&0.02\,(0.05)&0.02\,(0.03)&0.02\,(0.02)&0.01\,(0.02)\\
 \hline
iv)  $\kappa=1$, $\lambda=n^2$  &0.00\,(0.00)&0.00\,(0.00)&0.00\,(0.00)&0.00\,(0.00)&0.00\,(0.00)&0.00\,(0.00)&0.00\,(0.00)&0.00\,(0.00)&0.00\,(0.00)\\
 \hline
v) $\kappa=n$, $\lambda=1$ & 0.92\,(0.00)&0.95\,(0.00)&0.97\,(0.00)&0.98\,(0.00)&0.99\,(0.00)&0.99\,(0.00)&1.00\,(0.00)&1.00\,(0.00)&1.00\,(0.00)\\
 \hline
\end{tabular}}
\end{table}

\begin{table}[tbp]
\centering
\caption{True positive rate in Experiment $A_2$}
\label{table:tpr3}
\scriptsize{%
 \setlength{\tabcolsep}{1pt}%
 \renewcommand{\arraystretch}{1.3}%
\begin{tabular}{|l||l|l|l|l|l|l|l|l|l|}
 \hline
Method~\textbackslash~n&  50&100&200&500&1\,000&2\,000&5\,000&10\,000&20\,000 \\
 \hline
 \hline
i) $\kappa=1$, $\lambda=1$&1.00\,(0.00) &1.00\,(0.00)&1.00\,(0.00)&1.00\,(0.00)&1.00\,(0.00)&1.00\,(0.00)&1.00\,(0.00)&1.00\,(0.00)&1.00\,(0.00)\\
 \hline
ii) $\kappa=1$, $\lambda=\sqrt{n}$ &1.00\,(0.00) &1.00\,(0.00)&1.00\,(0.00)&1.00\,(0.00)&1.00\,(0.00)&1.00\,(0.00)&1.00\,(0.00)&1.00\,(0.00)&1.00\,(0.00)\\
 \hline
iii)  $\kappa=1$, $\lambda=n+1$&1.00\,(0.00) &1.00\,(0.00)&1.00\,(0.00)&1.00\,(0.00)&1.00\,(0.00)&1.00\,(0.00)&1.00\,(0.00)&1.00\,(0.00)&1.00\,(0.00)\\
 \hline
iv)  $\kappa=1$, $\lambda=n^2$  &0.89\,(0.19)&0.94\,(0.09)&0.94\,(0.08)&0.95\,(0.09)&0.98\,(0.05)&0.99\,(0.03)&0.99\,(0.02)&0.99\,(0.02)&1.00\,(0.01)\\
 \hline
v) $\kappa=n$, $\lambda=1$ &1.00\,(0.00) &1.00\,(0.00)&1.00\,(0.00)&1.00\,(0.00)&1.00\,(0.00)&1.00\,(0.0v0)&1.00\,(0.00)&1.00\,(0.00)&1.00\,(0.00)\\
 \hline
\end{tabular}}
\end{table}

In Experiment $A_1$ we see that the $\ell_2$-error is not very sensitive
to the choice of hyperparameters: the uniform prior i), the mild
shrinkage ii), and Castillo and Van der Vaart's recommendation iii) all
perform comparably. Only the heavy shrinkage iv) is introducing too high
penalization, especially for large models. Unsurprisingly, the choice of
hyper-parameters v) is also substantially worse than the others, because
it expresses exactly the wrong type of prior assumptions by heavily
penalizing sparse models. In Experiment $A_2$ we see that the best
hyper-parameters are Castillo and Van der Vaart’s recommendation iii)
and the heavy shrinkage iv), with the latter having a large
variability in performance. Hyper-parameter choices i) and ii) are
introducing no or only mild penalization for large models and indeed are
also observed to have somewhat worse performance than choices iii) and
iv), with the difference getting more pronounced for larger sample
sizes. Finally, as in Experiment $A_1$, the hyper-parameter setting
v)~is the worst by far.

We also study the false discovery rate (FDR) and true positive rate
(TPR) of the spike-and-slab priors (relatedly, see
\cite{castillo:roquain:2018} for the theoretical underpinning of FDR
control with empirical Bayes spike-and-slab priors). Unsurprisingly, the FDR is smallest in both experiments in case of heavy shrinkage
v), but almost equally good rates are obtained for the recommended
choice iii). Mild ii) or no  i) shrinkage result in somewhat worse FDR,
while the sparsity discouraging setting v) essentially selects all the
noise. In Experiment~$A_1$ the best TPR is obtained, not surprisingly,
by setting v), which conservatively selects everything. Hyper-parameter
choices i) and ii) perform comparably well, closely followed by iii),
while the heavy shrinkage method iv) is substantially worse. In
Experiment $A_2$ all hyper-parameter settings perform equally well,
except for the heavy shrinkage iv), which is slightly worse. The
good performance of the methods is due to the relatively high value
($2\sqrt{2\log n}$) for the non-zero signal coefficients, which
lies above the detection threshold $\sqrt{2\log n}$.

We conclude that, overall, the recommended choice iii) indeed appears to
have an advantage over the alternatives, and that even heavier
penalization as in choice iv) is harmful.

The above simulation study is just one example of how our exact
algorithms can be used to study asymptotic properties of model selection
priors, and more specifically spike-and-slab priors. Another possible
application not considered here would, for instance, be to study the
accuracy of Bayesian uncertainty quantification (see
\cite{castillo:szabo:2018} for frequentist coverage of Bayesian credible
sets resulting from empirical Bayes spike-and-slab priors). 

\section{Discussion}\label{sec:discuss}

We have proposed fast and exact algorithms for computing the Bayesian
posterior distribution corresponding to model selection priors
(including spike-and-slab priors as a special case) in the sparse normal
sequence model. Since the normal sequence model corresponds to linear
regression with identity design, the question arises whether the derived
algorithms can be extended to sparse linear regression with more general
designs or other more complex models. We first note that all methods are
agnostic about where the conditional densities of the spikes $p(Y_i\mid
B_i=0) = \phi(Y_i)$ and the slabs $p(Y_i \mid B_i=1) = \psi(Y_i)$ come
from. It is therefore trivial to extend them to any model that replaces
the distribution of $Y_i$ given $\S$ by
\[
  Y_i \mid \S \sim 
    \begin{cases}
      \psi_i & \text{if $i \in \S$,}\\
      \phi_i & \text{otherwise,}
    \end{cases}
\]
for any densities $\psi_i$ and $\phi_i$. (In fact, this is already
supported by our R package \cite{VanErvenDeRooijSzabo2019}.)
Such extensions make it possible to
easily handle other noise models for $\varepsilon_i$ or general diagonal
designs; and, as pointed out by a referee, it also allows incorporating
a non-atomic prior on $\theta_i$ in case $i \not \in \S$. We further
anticipate that extensions to general sparse design matrices may be
possible by generalizing the HMM from Section~\ref{sec:HMM} to more
general Bayesian networks and applying a corresponding inference
algorithm to compute marginal posterior probabilities. However, for
non-sparse design matrices the extension would be very challenging, if
possible at all, because the Bayesian network of the hidden states could
become fully connected. An interesting intermediate case is studied by
Papaspiliopoulos and Rossell \cite{PapaspiliopoulosRossell2017}, who
consider best-subset selection for block-diagonal designs. For the
normal sequence model, their assumptions amount to the requirement that
$\Lambda_n$ is a point-mass on a single $\alpha$, and they point out
that in this case ``best-subset selection becomes trivial.'' For
non-diagonal designs their results are non-trivial, because they are
able to integrate over a continuous hyper-prior on the variance
$\sigma^2$ of the noise $\varepsilon_i$. In contrast, we assume fixed
$\sigma^2$, which we then take to be $\sigma^2=1$ without loss of
generality. Our methods can be used to calculate the marginal likelihood
$p(Y\mid \sigma^2)$ without further computational overhead, so it would
be possible to run them multiple times to incorporate a discrete prior
on a grid of values for $\sigma^2$, but it is not obvious if our results
can be extended to continuous priors over $\sigma^2$. Exploration of
these directions is left for future work. 

Even without extending our methods to full linear regression or
continuous priors on $\sigma^2$, we believe that they are already very
useful as a benchmark procedure: any approximation technique for general
linear regression may be applied to the special case of sparse normal
sequences and its approximation error computed as in
Section~\ref{sec:compare:approx}. If a method does not work well in this
special case, then certainly we cannot trust it for more general
regression. The existence of such a benchmark method is very important,
since, for instance, there are no available diagnostics to determine
whether Markov Chain Monte Carlo samplers have converged to their
stationary distribution or if they have explored a sufficient proportion
of the models in the model space.

We have also explored the exact connection between general model
selection priors and the more specific spike-and-slab priors. Since for
spike-and-slab priors one can construct faster algorithms, it is useful
to know which model selection priors can be represented in this form.
The proof of our result amounts to a finite sample version of de
Finetti's theorem for a particular subclass of exchangeable
distributions, which may be of interest in its own right.

\section*{Acknowledgements}

We would like to thank Steven de Rooij for performing the experiments
with Karatsuba's algorithm reported in Appendix~\ref{sec:speedupCVdV} in
the supplemental material, for suggesting the use of interval arithmetic
in the experiments, and for detailed discussions underlying
Theorem~\ref{thm:discrete}. We further thank Isma{\"e}l Castillo for
providing the R code for the experiments in \cite{castillo2012} and
Veronika Ro{\v{c}}kov{\'a} for providing an R implementation of the
Spike-and-Slab LASSO for the normal sequence model
\cite{rovckova:SSLsoftware:2018}.

\section*{Supplementary Material}

The supplement contains a review of the exact algorithm by Castillo and
Van der Vaart and a discussion on how to perform all computations in a
logarithmic representation. It further includes an additional variation
on Experiment $A_1$ from Section~\ref{sec:asymptotics}. Finally, the
proofs for all theorems and the examples from
Section~\ref{sec:connection} are also given in the supplement.

\bibliographystyle{abbrvnat}

\bibliography{rates}

\clearpage

\appendix

\begin{center}
  \LARGE
  \bfseries
  Supplementary Material
\end{center}

\section{The Castillo-Van der Vaart Algorithm}
\label{sec:CvdV}
In this section we first recall the Castillo-Van der Vaart algorithm \cite{castillo2012}. 
Straight-forward implementation of the algorithm fails for
sample sizes larger than $n\geq 500$, because the intermediate results
exceed the maximum range that can be numerically represented.
Fortunately, this can be resolved by performing all computations in a
logarithmic representation, which we discuss second. The bottleneck then
becomes the algorithm's computational complexity, because it requires
$O(n^3)$ steps, which is prohibitive for large $n$. At the end of this
appendix we discuss two possible speed ups of the algorithm based on
fast polynomial multiplication and long division, respectively, and show
that neither of them works well practice. 

\subsection{Description of the algorithm}

The key ingredient of the Castillo-Van der Vaart algorithm is their
observation that, for any $s \in \{0,1,\ldots,n\}$ and any sequences of
numbers $a = (a_1,\ldots,a_n)$ and $b = (b_1,\ldots,b_n)$, the sum
\[
  C_s(a,b) = \sum_{|\S|=s} \prod_{i \in \S} a_i \prod_{i \neq \S} b_i
\]
is the coefficient of $Z^s$ in the polynomial
\begin{equation}\label{eqn:polynomial}
  Z \mapsto \prod_{i=1}^n (a_i Z + b_i).
\end{equation}
All coefficients of this polynomial can be computed in $O(n^2)$
operations by computing the products term by term, which is much faster
than explicitly summing over the exponentially many subsets of size $s$.
This observation allows Castillo and Van der Vaart to compute the
Bayesian marginal likelihood as follows:
\begin{equation}\label{eqn:marginallikelihood}
  Q_n = \sum_{s=0}^n \frac{\pi_n(s)}{\binom{n}{s}} \sum_{|\S|=s}
  \prod_{i \in \S} \psi(Y_i) \prod_{i \neq \S} \phi(Y_i)
      = \sum_{s=0}^n \frac{\pi_n(s)}{\binom{n}{s}} C_s(\vpsi,\vphi),
\end{equation}
with $\vpsi = (\psi(Y_1),\ldots,\psi(Y_n))$ and $\vphi =
(\phi(Y_1),\ldots,\phi(Y_n))$. \ The binomial coefficients can be precomputed in
$O(n)$ time using the recursion $\binom{n}{s} =
\binom{n}{s-1}(n-s+1)/s$.\footnote{For extra numerical precision it is
sometimes recommended to compute the binomial coefficients using
Pascal's triangle, but this takes $O(n^2)$ steps and the precision of
these coefficients is not the limiting factor of the algorithm.}
Assuming that $\pi_n(s)$ can be evaluated efficiently, computing the sum in
\eqref{eqn:marginallikelihood} then takes another $O(n)$ steps, which
means that the computation of the coefficients $C_s(\vpsi,\vphi)$ is the
dominant factor and all together $Q_n$ can be computed in $O(n^2)$
steps.

The same idea can be used again to compute the marginal posterior
probabilities
\begin{align}
  q_{n,i}
    &= \frac{1}{Q_n} \sum_{s=1}^n \frac{\pi_n(s)}{\binom{n}{s}} \psi(Y_i)
    \sum_{\substack{|\S|=s\\i \in \S}} \prod_{\substack{j \in \S\\j \neq i}} \psi(Y_j) \prod_{j
    \neq \S} \phi(Y_j) \notag\\
    &= \frac{1}{Q_n} \sum_{s=1}^n \frac{\pi_n(s)}{\binom{n}{s}} 
    \sum_{|\S|=s} \prod_{j \in \S} \vpsi_j \prod_{j \neq \S} \vphi^i_j
    = \frac{1}{Q_n} \sum_{s=1}^n \frac{\pi_n(s)}{\binom{n}{s}}
    C_s(\vpsi,\vphi^i),
  \label{eqn:marginalpostCvdV}
\end{align}
where $\vpsi$ is as before, and $\vphi^i$ equals $\vphi$ except that the
$i$-th component is replaced by~$0$. When $Q_n$ has been precomputed,
calculating $q_{n,i}$ takes $O(n^2)$ operations, just like computing
\eqref{eqn:marginallikelihood}. Repeating for all $n$ marginal posterior
probabilities $q_{n,1},\ldots,q_{n,n}$ therefore takes $O(n^3)$
operations in total.

\subsection{Logarithmic Representation}
\label{sec:logarithmic}

The Castillo-Van der Vaart algorithm (in its basic form described above)
works well for small sample sizes, but, as demonstrated in
Section~\ref{sec:simulations}, starts to fail for~$n$ larger than
roughly $500$. The reason is not computation time, which is still very
reasonable for these sample sizes, but the fact that the coefficients
$C_s(\vpsi,\vphi)$ and $C_s(\vpsi,\vphi^i)$ can take values ranging from
exponentially small in $n$ to exponentially large, and will therefore
underflow to zero or overflow to infinity when represented in the
standard double-precision floating-point format.

This range issue, however, can be resolved by using the following trick:
instead of the original quantities, we only compute the logarithms of
the (nonnegative) numbers $C_s(\vpsi,\vphi)$, $C_s(\vpsi,\vphi^i)$,
$\binom{n}{s}$ and $\pi_n(s)$, and we calculate
\eqref{eqn:marginallikelihood} and \eqref{eqn:marginalpostCvdV} using
these logarithmic representations.

Of course we cannot then, as an intermediate step, ever exponentiate our
numbers, so some care is needed when performing basic arithmetic. Given arbitrary
numbers $x = \ln a$ and $y = \ln b$, multiplication and division without
exponentiating are straightforward:
\begin{align*}
  \ln (a b) &= x + y && (\text{multiplication})\\
  \ln (a/b) &= x - y && (\text{division}).
\end{align*}
For addition and subtraction, we avoid direct exponentiation as follows:
assume without loss of generality that $x \geq y$; then
\begin{align*}
  \ln (a + b) &= x + \ln(1 + e^{y - x}) && (\text{addition}),\\
  \ln (a - b) &= x + \ln(1 - e^{y - x}) && (\text{subtraction}).
\end{align*}
Since $y - x \leq 0$ by assumption, these calculations can never
overflow. It is still possible that $\exp(y - x)$ underflows to $0$ if
$x \gg y$, but in that case the result will be $x$, which is very
accurate. (See e.g.\ \citep{Cook2011} for a similar discussion.) We
apply the rules above for $a,b \in [0,\infty]$ with the conventions
$\ln(\infty) = \infty$ and $\ln(0) = -\infty$ whenever the respective
operations are well-defined. For addition, there are therefore two cases
that require special care: if $x = y \in \{-\infty,\infty\}$, then $y-x$
is not defined, but $\ln(a + b)$ still makes sense; and for subtraction
$\ln(a - b)$ also makes sense for the case $x=y=-\infty$. These should
therefore be handled separately by defining
\begin{align*}
  \ln(a + b) &= 
    \begin{cases}
      \infty & \text{if $x = y = \infty$},\\
      -\infty & \text{if $x = y = -\infty$},
    \end{cases}\\
  \ln(a - b) &=
      -\infty \qquad \text{if $x = y = -\infty$.}
\end{align*}
The logarithmic representations and arithmetical rules described above
resolve the numerical accuracy issue by greatly extending the range of
representable values. One may wonder, however, whether, in the process,
we have not reduced the precision with which numbers are being stored by
too much. Luckily, this turns out not to be the case. In
Section~\ref{sec:simulations} we perform extensive experiments, which
confirm that, indeed, the resulting algorithm achieves high numerical
accuracy.

\subsection{Speeding up the Castillo-Van der Vaart Algorithm}
\label{sec:speedupCVdV}

In this subsection we investigate ways of speeding up the
Castillo-Van der Vaart algorithm. We consider two promising approaches
based on fast polynomial multiplication and long division, which,
surprisingly, both turn out to have severe limitations.

\paragraph{Fast Polynomial Multiplication}

Castillo and Van der Vaart \cite{castillo2012} point out
that polynomial multiplication, which naively takes $O(n^2)$ steps, is
actually possible in $O(n \ln^k n)$ steps for suitable $k$ (they
suggest $k=2$), which would allow computing all marginal posterior
probabilities $q_{n,1}, \ldots, q_{n,n}$ in $O(n^2 \ln^k n)$ steps.
Indeed, one possible approach is to recursively split
\eqref{eqn:polynomial} into $O(\ln n)$ multiplications of two
polynomials of equal size, and use an advanced algorithm for general
polynomial multiplication like the Toom-Cook algorithm \citep{knuth:1997},
which requires $O(n 2^{\sqrt{2\ln n}}\ln n)$ steps, or the
Sch{\"o}nhage-Strassen algorithm \citep{schonhage:1971}, which
requires $O(n \ln n \ln \ln n)$ steps. However, the constants in
these asymptotic rates are prohibitive and therefore the benefits of
these advanced algorithms only kick in for very large $n$. We have
experimented with the Karatsuba algorithm \citep{karatsuba:1962}, which
is a simpler special case of Toom-Cook, and at best obtained a factor of
$10$ speed-up for $n \leq 10^6$ when computing polynomials like
\eqref{eqn:polynomial}, which is minor compared to a factor of $n$
speed-up when $n=10^6$. We therefore do not consider the gains
sufficient to warrant the extra algorithmic complexity of using these
more advanced algorithms. Furthermore, there is no potential use for the
case $n > 10^6$ either, because then $O(n^2 \ln^k n)$ steps for the
total algorithm is already prohibitive regardless of the exact constants
in the polynomial multiplication subroutine.

\paragraph{Long Division}

We next describe a second attempt at speeding up the Castillo-Van der
Vaart algorithm, initially suggested by \citet{Castillo2017}, which is
based on long division. The main observation is that, for any $i$, the
polynomial \eqref{eqn:polynomial} for $(\vpsi,\vphi^i)$ differs from the
polynomial for $(\vpsi,\vphi)$ only in the $i$-th factor. Since we will
compute the coefficients $C_s(\vpsi,\vphi)$ of the polynomial for
$(\vpsi,\vphi)$ anyway (in the process of calculating $Q_n$), we can
divide off the $i$-th factor using long division for polynomials to
obtain the vector of coefficients $x = (x_0,\ldots,x_{n-1})$ such that
\begin{equation}\label{eqn:longdivision}
  \prod_{j=0}^{n-1} x_j Z^j
    = \prod_{\substack{j = 0,\ldots,n\\j \neq i}} (\vpsi_j Z + \vphi_j)
    = \frac{\prod_{j=0}^{n} C_j(\vpsi,\vphi) Z^j}{\vpsi_i Z + \vphi_i}
  \qquad \text{for all $Z$.}
\end{equation}
As explained below, this takes $O(n)$ steps. Multiplying the polynomial
$\prod_{j=0}^{n-1} x_j Z^j$ by $(\vpsi_i Z + \vphi^i_i)$ then takes
another $O(n)$ steps, and consequently we can compute the coefficients
$C_s(\vpsi,\vphi^i)$ needed in \eqref{eqn:marginalpostCvdV}, in $O(n)$
steps instead of the $O(n^2)$ steps we required before. Doing this for
$i=1,\ldots,n$ therefore takes $O(n^2)$ steps in total, which is a
speed-up of a factor $n$ compared to the original Castillo-Van der Vaart
algorithm.

As demonstrated in Section~\ref{sec:simulations}, the improvement from
$O(n^3)$ to $O(n^2)$ operations provides a major speed-up.
Unfortunately, however, we also show in Section~\ref{sec:simulations}
that performing long division (i.e.\ solving \eqref{eqn:longdivision}
for $x$) is numerically so unstable that the results can become
unreliable, even when using the logarithmic representation from
Appendix~\ref{sec:logarithmic}. It is therefore worth elaborating on how
we solve \eqref{eqn:longdivision}.

Solving this identity for $x$ amounts to solving the
overconstrained\footnote{The system is overconstrained because we know
the remainder of the long division will be zero.} linear system $B x =
a$ for 
\begin{align*}
    B&= 
    \begin{pmatrix}
      \vphi_i\\
      \vpsi_i & \vphi_i\\
          & \vpsi_i & \vphi_i\\
          &&\ddots&\ddots\\
          &&& \vpsi_i & \vphi_i\\
          &&&& \vpsi_i
    \end{pmatrix},
    & a&=
    \begin{pmatrix}
      C_0(\vpsi,\vphi)\\
      C_1(\vpsi,\vphi)\\
      C_2(\vpsi,\vphi)\\
      \vdots\\
      C_{n-1}(\vpsi,\vphi)\\
      C_n(\vpsi,\vphi)
    \end{pmatrix}.
\end{align*}
After dropping any row from this system of equalities, it can be solved
in $O(n)$ steps using back-substitution. We opt to drop the first row,
which makes the resulting procedure identical to long division. The
trouble with this approach is that it performs many divisions, which
translate into subtractions in the logarithmic representation, and
subtractions of two numbers of similar size can quickly lose numerical
precision. These errors accumulate while calculating the coefficients of
$x$ and hence the coefficients  that are calculated at the end of the
procedure are unreliable. We have therefore experimented with
alternatives like dropping the last or middle rows, or calculating
different parts of $x$ based on dropping different rows. We have also
tried an iterative refinement approach that apparently goes back to the
early days of computing in the 1940s \cite[p.\,184]{Higham2002}: here
$x^1$ is the solution initially computed and we repeatedly refine our
answer according to $x^{t+1}$ = $x^t + y^t$, where $y^t$ fits the
residuals: $B y^t$ = $a - B x^t$. This may still be computationally
attractive for small $t$, like e.g.\ $t \leq 5$. Although these
variations could sometimes postpone the problem to slightly larger $n$,
none of them has lead to a way to resolve it.

\section{Additional simulation study}\label{sec:simulation:extra}
In this section we provide an additional simulation study to Section \ref{sec:asymptotics}. In Experiment $A_3$ we consider the same hyper-parameter choices  of the prior $\Lambda_n=\betadist(\kappa,\lambda)$, i.e.
i) $\kappa=1$ and $\lambda=1$;  ii) 
$\kappa=1$ and $\lambda=\sqrt{n}$; iii) heavy shrinkage, $\kappa=1$ and $\lambda=n+1$; iv) $\kappa=1$ and $\lambda=n^2$ ;
and v) $\kappa=n$ and
$\lambda=1$. We take sample sizes ranging from $n=50$ to
$n=20\,000$,  choose the true sparsity level to be $s=25$ and consider uniformly distributed non-zero signal
coefficients between 5 and 10, i.e.\ $\theta_i\sim U(5,10)$ for $i\in \S$. 

We repeat each experiment 20 times and report the average $\ell_2$-error between the posterior mean for $\theta$ and the true signal $\theta$, the false discovery rates and the true positive rates, and their standard deviation in parenthesis.  The results are collected in
Tables~\ref{table:l2_2}, \ref{table:fdr2}, and \ref{table:tpr2}, respectively. We can conclude that we obtain comparable results to Section  \ref{sec:asymptotics}.

\begin{table}[tbp]
\centering
\caption{$\ell_2$ distance of the posterior mean from the true signal in Experiment $A_3$}
\label{table:l2_2}
\scriptsize{%
 \setlength{\tabcolsep}{1pt}%
 \renewcommand{\arraystretch}{1.3}%
\begin{tabular}{|l||l|l|l|l|l|l|l|l|l|}
 \hline
Method~\textbackslash~n&  50&100&200&500&1\,000&2\,000&5\,000&10\,000&20\,000 \\
 \hline
 \hline
i) $\kappa=1$, $\lambda=1$ &6.67\,(1.05)&7.21\,(0.87)&7.59\,(0.82)&7.65\,(1.19)&8.51\,(1.15)&8.44\,(0.82)&8.90\,(1.00)&9.03\,(1.14)&9.53\,(1.65)\\
 \hline
ii) $\kappa=1$, $\lambda=\sqrt{n}$ &6.37\,(1.07)&6.77\,(0.90)&7.27\,(0.79)&7.45\,(1.18)&8.34\,(1.16)&8.31\,(0.83)&8.80\,(1.01)&8.96\,(1.14)&9.47\,(1.65)\\
 \hline
iii)  $\kappa=1$, $\lambda=n+1$&\textbf{6.04}\,(1.08) &\textbf{6.23}\,(1.08)&\textbf{6.41}\,(0.93)&\textbf{6.61}\,(1.24)&\textbf{7.49}\,(1.46)&\textbf{7.48}\,(1.08)&\textbf{7.86}\,(1.24)&\textbf{8.25}\,(1.39)&\textbf{8.58}\,(2.00)\\
\hline
iv)  $\kappa=1$, $\lambda=n^2$  &7.06\,(1.49)&7.78\,(1.65)&8.91\,(1.62)&10.27\,(2.09)&12.39\,(1.97)&13.38\,(2.19)&14.58\,(1.94)&15.45\,(2.13)&17.25\,(2.15)\\
 \hline
v) $\kappa=n$, $\lambda=1$ & 6.78\,(1.03)&8.49\,(0.73)&11.35\,(0.75)&17.01\,(0.82)&24.27\,(0.62)&33.78\,(0.57)&52.92\,(0.56)&74.76\,(0.75)&105.98\,(0.74)\\
 \hline
\end{tabular}}
\end{table}

\begin{table}[tbp]
\centering
\caption{False discovery rate  in Experiment $A_3$}
\label{table:fdr2}
\scriptsize{%
 \setlength{\tabcolsep}{1pt}%
 \renewcommand{\arraystretch}{1.3}%
\begin{tabular}{|l||l|l|l|l|l|l|l|l|l|}
 \hline
Method~\textbackslash~n&  50&100&200&500&1\,000&2\,000&5\,000&10\,000&20\,000 \\
 \hline
 \hline
i) $\kappa=1$, $\lambda=1$ &0.50\,(0.00)&0.56\,(0.10)&0.29\,(0.11)&0.16\,(0.09)&0.15\,(0.05)&0.11\,(0.05)&0.14\,(0.06)&0.11\,(0.06)&0.12\,(0.07)\\
 \hline
ii) $\kappa=1$, $\lambda=\sqrt{n}$ &0.48\,(0.05)&0.25\,(0.07)&0.22\,(0.08)&0.12\,(0.08)&0.03\,(0.06)&0.10\,(0.05)&0.13\,(0.06)&0.11\,(0.06)&0.12\,(0.07)\\
 \hline
iii)  $\kappa=1$, $\lambda=n+1$&0.05\,(0.04) &0.02\,(0.02)&0.03\,(0.04)&0.01\,(0.02)&0.03\,(0.03)&0.02\,(0.03)&0.02\,(0.02)&0.01\,(0.02)&0.01\,(0.03)\\
 \hline
iv)  $\kappa=1$, $\lambda=n^2$  &0.00\,(0.01)&0.00\,(0.00)&0.00\,(0.00)&0.00\,(0.00)&0.00\,(0.00)&0.00\,(0.00)&0.00\,(0.00)&0.00\,(0.00)&0.00\,(0.00)\\
 \hline
v) $\kappa=n$, $\lambda=1$ & 0.5\,(0.00)&0.75\,(0.00)&0.88\,(0.00)&0.95\,(0.00)&0.98\,(0.00)&0.99\,(0.00)&1.00\,(0.00)&1.00\,(0.00)&1.00\,(0.00)\\
 \hline
\end{tabular}}
\end{table}

\begin{table}[tbp]
\centering
\caption{True positive rate  in Experiment $A_3$}
\label{table:tpr2}
\scriptsize{%
 \setlength{\tabcolsep}{1pt}%
 \renewcommand{\arraystretch}{1.3}%
\begin{tabular}{|l||l|l|l|l|l|l|l|l|l|}
 \hline
Method~\textbackslash~n&  50&100&200&500&1\,000&2\,000&5\,000&10\,000&20\,000 \\
 \hline
 \hline
i) $\kappa=1$, $\lambda=1$&1.00\,(0.00) &1.00\,(0.00)&1.00\,(0.01)&1.00\,(0.02)&0.99\,(0.02)&0.98\,(0.03)&0.97\,(0.03)&0.96\,(0.04)&0.94\,(0.05)\\
 \hline
ii) $\kappa=1$, $\lambda=\sqrt{n}$ &1.00\,(0.00)&1.00\,(0.00) &1.00\,(0.01)&0.99\,(0.02)&0.99\,(0.02)&0.97\,(0.03)&0.97\,(0.03)&0.96\,(0.04)&0.94\,(0.05)\\
 \hline
iii)  $\kappa=1$, $\lambda=n+1$&1.00\,(0.01) &1.00\,(0.01)&0.99\,(0.02)&0.98\,(0.03)&0.96\,(0.04)&0.96\,(0.04)&0.94\,(0.05)&0.92\,(0.05)&0.90\,(0.06)\\
 \hline
iv)  $\kappa=1$, $\lambda=n^2$  &0.95\,(0.04)&0.93\,(0.06)&0.88\,(0.06)&0.83\,(0.08)&0.76\,(0.09)&0.73\,(0.10)&0.69\,(0.09)&0.65\,(0.10)&0.59\,(0.10)\\
 \hline
v) $\kappa=n$, $\lambda=1$ &1.00\,(0.00) &1.00\,(0.00)&1.00\,(0.00)&1.00\,(0.00)&1.00\,(0.00)&1.00\,(0.0v0)&1.00\,(0.00)&1.00\,(0.00)&1.00\,(0.00)\\
 \hline
\end{tabular}}
\end{table}

\section{Proofs}\label{sec:proofs}

\subsection{Proof of Theorem \ref{thm:discrete}}
\label{sec:proofdiscrete}

\begin{proof}
  We start the proof by introducing some additional notation: let
  $\Gamma_n$ be the distribution on $\beta$ induced by $\Lambda_n$ and
  the mapping $\beta(\alpha)$, with density
  \[
    \gamma_n(\beta)
      = \frac{\der \Gamma_n(\beta)}{\der \beta}
      = \frac{\der \Lambda_n(\alpha(\beta))}{\der \beta}
      = 2\lambda_n(\alpha(\beta))\sqrt{\alpha(\beta)(1-\alpha(\beta))},
  \]
  for which condition \eqref{cond:prior} implies that
  \begin{equation}\label{cond:phiprior}
    \frac{\sup_{\beta \in [\beta_j,\beta_{j+1}]} \gamma_n(\beta)}{\inf_{\beta
    \in[\beta_j,\beta_{j+1}]} \gamma_n(\beta)}\leq e^{L\sqrt{n}\delta_k}
    \quad \text{for $j=0,\ldots,k$,}
  \end{equation}
  with $\beta_0 = 0$ and $\beta_{k+1} = \pi/2$. Then fix an arbitrary
  $\mlalpha \in [0,1]$, and let $P_\beta(\mlalpha) =
  P_{\alpha(\beta)}(n,\mlalpha)$. Now take $\mlj \in \{0,\ldots,k\}$ such
  that $\beta_{\mlj} \leq \mlphi \leq \beta_{\mlj+1}$ contains the maximum
  likelihood $\beta$-parameter $\mlphi=
  \beta(\mlalpha) = \argmax_\beta P_\beta(\mlalpha) $.
  
  Let us first deal with the second inequality in
  \eqref{eqn:approxBernoulli}, which follows with $C_L = C_1 + C_2 +
  C_1C_2$ by combining the following two assertions:
  \begin{align}
    \int_0^{\pi/2} P_\beta(\mlalpha) \gamma_n(\beta) \der \beta
      &\leq \left(1 + \frac{C_1}{m}\right)
        \int_{[0,\pi/2] \setminus \A}
          P_\beta(\mlalpha) \gamma_n(\beta) \der \beta,
      \label{eqn:noA}\\
    \int_{[0,\pi/2] \setminus \A}
          P_\beta(\mlalpha) \gamma_n(\beta) \der \beta
      &\leq \left(1 + \frac{C_2}{m}\right)
        \sum_{j=1}^k P_{\beta_j}(\mlalpha) \approxLambda_n(\alpha_j).
      \label{eqn:discretize}
  \end{align}
  Here $\A = [\beta_{\mlj},\beta_{\mlj+1}]$, and $C_1 = 4 e^{L\pi/4 +
  \pi^2}$ and $C_2 = L\pi$ are constants. We will also use that $m >
  \max\{C_1,C_2\}$, which is implied by the assumption that $m > C_L$.

  To quantify the approximation error when we change $\beta$ in
  $P_\beta(\mlalpha)$, we will require the following lemma: 
  \begin{lemma}[Lemma~3 of \cite{DeRooijVanErven2009}]\label{lem:discr1}
    Let $\hat \beta = \arg\max_\beta P_\beta(\mlalpha)$ and suppose
    $\beta_1,\beta_2,\hat\beta \in (0,\pi/4]$.
    Then
    \[\ln\frac{P_{\beta_1}(\mlalpha)}{P_{\beta_2}(\mlalpha)}\le
    4n(\beta_2-\beta_1)(\beta_2-\hat \beta)(1 \bmax \frac{\hat \beta}{\beta_2}).\]
  \end{lemma}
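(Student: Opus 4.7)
The plan is to combine strict concavity of $f(\beta) = \ln P_\beta(\mlalpha)$ with a direct algebraic bound on $f'(\beta_2)$. I would first observe that $f(\beta) = 2n\mlalpha \ln\sin\beta + 2n(1-\mlalpha)\ln\cos\beta$, compute $f'(\beta) = 2n[\mlalpha\cot\beta - (1-\mlalpha)\tan\beta]$, and verify that $\hat\beta = \arcsin\sqrt{\mlalpha}$ is the unique critical point; strict concavity follows from $f''(\beta) = -2n[\mlalpha\csc^2\beta + (1-\mlalpha)\sec^2\beta] < 0$ on $(0,\pi/2)$. The tangent-line inequality from concavity at $\beta_2$ then gives
\[
  \ln\frac{P_{\beta_1}(\mlalpha)}{P_{\beta_2}(\mlalpha)} = f(\beta_1) - f(\beta_2) \leq f'(\beta_2)(\beta_1 - \beta_2),
\]
so it remains to control $f'(\beta_2)$. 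Substituting $\mlalpha = \sin^2\hat\beta$, $1-\mlalpha = \cos^2\hat\beta$, factoring $\sin^2\hat\beta\cos^2\beta_2 - \cos^2\hat\beta\sin^2\beta_2 = \sin(\hat\beta-\beta_2)\sin(\hat\beta+\beta_2)$, and using $\sin\beta_2\cos\beta_2 = \tfrac{1}{2}\sin(2\beta_2)$ yields the compact form
\[
  f'(\beta_2) = \frac{4n\sin(\hat\beta-\beta_2)\sin(\hat\beta+\beta_2)}{\sin(2\beta_2)},
\]
in which the factor $4n$ mirrors the Fisher information of the Bernoulli model in the arcsine parametrization.

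The core technical step is the estimate
\[
  \frac{\sin(\hat\beta+\beta_2)}{\sin(2\beta_2)} \leq 1 \bmax \frac{\hat\beta}{\beta_2},
\]
which combined with $|\sin(\hat\beta-\beta_2)| \leq |\hat\beta-\beta_2|$ gives $|f'(\beta_2)| \leq 4n|\hat\beta-\beta_2|(1 \bmax \hat\beta/\beta_2)$. I would prove this ratio bound by splitting on cases. When $\hat\beta \leq \beta_2$, the constraint $\hat\beta+\beta_2 \leq 2\beta_2 \leq \pi/2$ and monotonicity of $\sin$ on $[0,\pi/2]$ bound the ratio by $1$. When $\hat\beta > \beta_2$, I would expand $\sin(\hat\beta+\beta_2) = \sin\hat\beta\cos\beta_2 + \cos\hat\beta\sin\beta_2$ and $\sin(2\beta_2) = 2\sin\beta_2\cos\beta_2$ to rewrite the ratio as $\frac{\sin\hat\beta}{2\sin\beta_2} + \frac{\cos\hat\beta}{2\cos\beta_2}$; the first summand is at most $\hat\beta/(2\beta_2)$ by the decreasing monotonicity of $\sin x/x$ on $(0,\pi/2]$, the second is at most $1/2$ because $\cos$ is decreasing and $\hat\beta > \beta_2$, and together they sum to at most $\hat\beta/\beta_2$ since $\hat\beta/\beta_2 \geq 1$.

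The last step is sign bookkeeping: both $f'(\beta_2)(\beta_1-\beta_2)$ and the target expression $4n(\beta_2-\beta_1)(\beta_2-\hat\beta)(1 \bmax \hat\beta/\beta_2)$ carry the same sign, namely that of $(\hat\beta-\beta_2)(\beta_1-\beta_2)$, so the magnitude estimate on $|f'(\beta_2)(\beta_1-\beta_2)|$ transfers directly to the signed inequality of the lemma. The main obstacle I expect is the sine-ratio inequality, and in particular the sub-case $\hat\beta > \beta_2$, which relies on $\sin\hat\beta/\sin\beta_2 \leq \hat\beta/\beta_2$: this goes in the opposite direction to the familiar $\sin x \leq x$ bound and must be justified through the decreasing monotonicity of $\sin x/x$ on $(0,\pi/2]$. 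Matching the two cases at the transition $\hat\beta = \beta_2$, and verifying that both sub-bounds $\hat\beta/(2\beta_2)+1/2$ and $1$ meet cleanly there, also requires careful accounting.
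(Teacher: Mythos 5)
The paper never proves this lemma itself --- it is imported verbatim from \citet{DeRooijVanErven2009} --- so your argument is a self-contained reconstruction rather than a variant of an in-paper proof. Most of it is sound: the computation $f'(\beta) = 2n\bigl[\mlalpha\cot\beta - (1-\mlalpha)\tan\beta\bigr]$, the identification $\hat\beta = \arcsin\sqrt{\mlalpha}$, strict concavity, the tangent-line bound $f(\beta_1)-f(\beta_2)\le f'(\beta_2)(\beta_1-\beta_2)$, the product formula $f'(\beta_2) = 4n\sin(\hat\beta-\beta_2)\sin(\hat\beta+\beta_2)/\sin(2\beta_2)$, and the two-case estimate $\sin(\hat\beta+\beta_2)/\sin(2\beta_2)\le 1\bmax(\hat\beta/\beta_2)$ are all correct, including the appeal to the monotonicity of $\sin x/x$ in the sub-case $\hat\beta>\beta_2$ and the clean match of the two cases at $\hat\beta=\beta_2$.

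The gap is in the final ``sign bookkeeping.'' If $A$ and $B$ have the same sign and $|A|\le|B|$, then $A\le B$ follows only when both are nonnegative; when both are negative the magnitude bound yields $A\ge B$, the reverse of what you need. Your argument therefore establishes the lemma only when $(\hat\beta-\beta_2)(\beta_1-\beta_2)\ge 0$, i.e.\ when the right-hand side is nonnegative. This is not a repairable oversight in the remaining case: for $(\hat\beta-\beta_2)(\beta_1-\beta_2)<0$ the stated inequality is false. Take $n=1$, $\mlalpha=1/2$ (so $\hat\beta=\pi/4$), $\beta_2=0.1$, $\beta_1=0.05$; then $P_\beta(\mlalpha)=\sin\beta\cos\beta$, so $\ln\bigl(P_{\beta_1}(\mlalpha)/P_{\beta_2}(\mlalpha)\bigr)=\ln\bigl(\sin(0.1)/\sin(0.2)\bigr)\approx -0.688$, while the right-hand side equals $4\,(0.05)\,(0.1-\pi/4)\,\bigl((\pi/4)/0.1\bigr)\approx -1.077$, and $-0.688\not\le -1.077$. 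The lemma must therefore carry the implicit restriction that $\beta_1$ and $\hat\beta$ lie weakly on the same side of $\beta_2$ (equivalently, that the right-hand side is nonnegative). This costs nothing for the paper, which only invokes the lemma with $\beta_1=\mlphi=\hat\beta$, where the right-hand side is the nonnegative quantity $4n(\beta_2-\hat\beta)^2(1 \bmax \hat\beta/\beta_2)$ and your proof goes through verbatim; but as a proof of the statement exactly as quoted, the last step does not close.
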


  Then, to prove assertions \eqref{eqn:noA} and \eqref{eqn:discretize},
  let
  \[
    \B =
    \begin{cases}
      [\beta_{\mlj} - m \delta_k, \beta_{\mlj+1} + m \delta_k] \intersection
      [m\delta_k/2,\pi/4] & \text{if $\mlphi \leq \pi/4$,}\\
      [\beta_{\mlj} - m \delta_k, \beta_{\mlj+1} + m \delta_k] \intersection
      [\pi/4, \pi/2 - m\delta_k/2] & \text{if $\mlphi > \pi/4$}
    \end{cases}
  \]
  be an interval around $\mlphi$ of width that is roughly proportional
  to $1/\sqrt{n}$, but that does not come too close to the boundary of the
  domain of $\beta$ and also does not cross over the midpoint $\pi/4$. We
  observe that $\B$ is at least $m/2$ times as wide as $\A$. If the prior
  on $\beta$ were uniform, then the prior mass of $\B$ would therefore be
  at least $m/2$ times the prior mass of $\A$.
  Applying~\eqref{cond:phiprior} $m+1$ times (the maximum number of
  intervals between discretization points that $\B$ extends away from
  $\A$), we obtain an approximate version of this statement:
  \begin{equation}\label{eqn:priorratio}
    \frac{\Gamma_n(\A)}{\Gamma_n(\B)}
      \leq \frac{\sup_{\beta \in \A} \gamma_n(\beta)}{\inf_{\beta
      \in \B} \gamma_n(\beta) }\frac{\delta_k}{m\delta_k/2}
      \leq \frac{2 e^{(m+1)L\sqrt{n}\delta_k}}{m}
      \leq \frac{2 e^{L\pi/4}}{m}.
  \end{equation}
  Let us consider the case $\mlphi \leq \pi/4$ (the case $\mlphi >
  \pi/4$ follows by symmetry). Applying \eqref{eqn:priorratio} and
  Lemma~\ref{lem:discr1} with $\beta_1 = \mlphi$ and $\beta_2 = \beta \in
  \B$, we obtain:
  \begin{align}
    \int_{\A} P_\beta(\mlalpha) \gamma_n(\beta) \der \beta
      &\leq \Gamma_n(\A) P_{\mlphi}(\mlalpha)
      = \frac{\Gamma_n(\A)}{\Gamma_n(\B)} \int_{\B} P_{\mlphi}(\mlalpha)
      \gamma_n(\beta) \der \beta \notag\\
      &\leq \frac{\Gamma_n(\A)}{\Gamma_n(\B)} \int_{\B} 
      e^{4n(\beta - \mlphi)^2(1 \bmax \frac{\mlphi}{\beta})}
      P_{\beta}(\mlalpha)
      \gamma_n(\beta) \der \beta \notag\\
      &\leq \frac{\Gamma_n(\A)}{\Gamma_n(\B)} 
      e^{16 n(m+1)^2 \delta_k^2}
      \int_{\B} P_{\beta}(\mlalpha) \gamma_n(\beta) \der \beta \notag\\
      &\leq \frac{2 e^{L\pi/4 + \pi^2}}{m} 
      \int_{\B} P_{\beta}(\mlalpha) \gamma_n(\beta) \der \beta,
      \label{eqn:AlessthanB}
  \end{align}
  from which \eqref{eqn:noA} follows under our assumption that $m > C_1$.

  Next we deal with \eqref{eqn:discretize} and note that, by symmetry,
  it is sufficient to verify
  \[
    \int_0^{\beta_{\mlj}} P_\beta(\mlalpha) \gamma_n(\beta) \der \beta
      \leq (1 + \frac{C_2}{m}) \sum_{j=1}^{\mlj} P_{\beta_j}(\mlalpha)
      \approxLambda_n(\alpha_j).
  \]
  On this interval, which lies left of $\mlphi$, the likelihood
  $P_\beta(\mlalpha)$ is increasing in $\beta$ (as follows e.g.\ from concavity
  of the log-likelihood), so we may upper bound the left-hand side by
  moving prior mass further to the right. By applying assertion
  \eqref{cond:phiprior} twice we can control how closely our prior on
  discretization points approximates a move of probability mass to the
  right: for $j=1,\ldots,k$ we have
  \begin{equation}\label{eqn:moveprior}
    \frac{\Gamma_n([\beta_{j-1},\beta_j])}{\Gamma_n([\beta_{j}-\delta_k/2,\beta_j+\delta_k/2])}
      \leq \frac{\sup_{\beta \in [\beta_{j-1},\beta_{j+1}]}
      \gamma_n(\beta)}{\inf_{\beta \in [\beta_{j-1},\beta_{j+1}]} \gamma_n(\beta)}
      \leq e^{2 L\sqrt{n}\delta_k}\\
      \leq e^{C_2/(2 m)}
      \leq 1+C_2/m,
  \end{equation}
  where we have used that $e^x \leq 1+2x$ for $x \in [0,1/2]$. We
  therefore find that
  \begin{align*}
    \int_0^{\beta_{\mlj}} P_\beta(\mlalpha) \gamma_n(\beta) \der \beta
      &= \sum_{j=1}^{\mlj} \int_{\beta_{j-1}}^{\beta_j} P_\beta(\mlalpha)
      \gamma_n(\beta) \der \beta\\
      &\leq \sum_{j=1}^{\mlj} P_{\beta_j}(\mlalpha)
      \Gamma_n([\beta_{j-1},\beta_j])\\
      &\leq \Big(1+\frac{C_2}{m}\Big)\sum_{j=1}^{\mlj} P_{\beta_j}(\mlalpha)
      \Gamma_n([\beta_{j}-\delta_k/2,\beta_j+\delta_k/2])\\
      &= \Big(1+\frac{C_2}{m}\Big)\sum_{j=1}^{\mlj} P_{\beta_j}(\mlalpha)
      \approxLambda_n(\alpha_j),
  \end{align*}
  as required.

  It remains to prove the first inequality in
  \eqref{eqn:approxBernoulli}, which follows by similar reasoning as
  before, but now from the inequalities
  \begin{align}
    \int_0^{\pi/2} P_\beta(\mlalpha) \gamma_n(\beta) \der \beta
      &\geq \left(1 - \frac{C_1}{2 m}\right)
        \left(\int_0^{\pi/2} + \int_{\A}\right)
          P_\beta(\mlalpha) \gamma_n(\beta) \der \beta,
        \label{eqn:addA}\\
      \left(\int_0^{\pi/2} + \int_{\A}\right)
          P_\beta(\mlalpha) \gamma_n(\beta) \der \beta
      &\geq \left(1 - \frac{C_3}{m}\right)
        \sum_{j=1}^k P_{\beta_j}(\mlalpha) \approxLambda_n(\alpha_j),
      \label{eqn:reversediscretize}
  \end{align}
  where $C_1$ is the same constant as above, $C_3 = \pi^2/4 + C_2$, and
  we now only need that $C_L \geq C_1/2 + C_3$, which is satisfied by
  our previous choice.

  To prove \eqref{eqn:addA}, we note that it readily follows from
  \eqref{eqn:AlessthanB}, so it remains only to
  establish~\eqref{eqn:reversediscretize}. To this end, we need the
  following inverse version of \eqref{eqn:moveprior}:
  \[
    \frac{\Gamma_n([\beta_j,\beta_{j+1}])}{\Gamma_n([\beta_{j}-\delta_k/2,\beta_j+\delta_k/2])}
      \geq \frac{\inf_{\beta \in [\beta_{j-1},\beta_{j+1}]}
      \gamma_n(\beta)}{\sup_{\beta \in [\beta_{j-1},\beta_{j+1}]} \gamma_n(\beta)}
      \geq \frac{1}{1+C_2/m} \geq 1 - \frac{C_2}{m}.
  \]
  Then, again using that the likelihood $P_\beta(\mlalpha)$ is increasing in
  $\beta$ on the left of $\mlphi$, we see that:
  \begin{align}
      \int_0^{\beta_{\mlj}} P_\beta(\mlalpha) \gamma_n(\beta) \der \beta
        &\geq \sum_{j=1}^{\mlj-1} \int_{\beta_j}^{\beta_{j+1}} P_\beta(\mlalpha)
        \gamma_n(\beta) \der \beta\notag\\
        &\geq \sum_{j=1}^{\mlj-1}  P_{\beta_j}(\mlalpha)
        \Gamma_n([\beta_j,\beta_{j+1}])\notag\\
        &\geq (1-\frac{C_2}{m}) \sum_{j=1}^{\mlj-1}  P_{\beta_j}(\mlalpha)
        \approxLambda_n(\alpha_j),
        \label{eqn:reverseLeft}
  \end{align}
  and, by symmetry,
  \begin{equation}
    \label{eqn:reverseRight}
    \int_{\beta_{\mlj+1}}^{\pi/2} P_\beta(\mlalpha) \gamma_n(\beta) \der \beta
      \geq (1-\frac{C_2}{m}) \sum_{j=\mlj+2}^{k}  P_{\beta_j}(\mlalpha)
        \approxLambda_n(\alpha_j).
  \end{equation}
  If $\mlj=0$ or $\mlj=k$, then one of the last two inequalities implies
  \eqref{eqn:reversediscretize} and we are done. Otherwise, $\mlj \in
  \{1,\ldots,k-1\}$ and by applying Lemma~\ref{lem:discr1} with $\beta_1
  = \mlphi$ and $\beta_2=\beta \in \A$ we get
  \begin{align}
    \int_{\beta_{\mlj}}^{\beta_{\mlj+1}} P_\beta(\mlalpha) &\gamma_n(\beta) \der \beta
      \geq e^{-4n\delta_k^2 (1 \bmax \frac{\mlphi}{\beta})} P_{\mlphi}(\mlalpha)
      \Gamma_n([\beta_{\mlj},\beta_{\mlj+1}])\notag\\
      &\geq e^{-3\pi^2/(4(m+1)^2)} P_{\mlphi}(\mlalpha)
      \Gamma_n([\beta_\mlj,\beta_{\mlj+1}])\notag \\
      &\geq e^{-3\pi^2/(4(m+1)^2)} \Big(1 - \frac{C_2}{m}\Big)\max\big\{P_{\beta_{\mlj}}(\mlalpha)
      \approxLambda_n(\alpha_{\mlj}),P_{\beta_{\mlj+1}}(\mlalpha)
      \approxLambda_n(\alpha_{\mlj+1})\big\}\notag\\
      &\geq \Big(1-\frac{\pi^2/4 + C_2}{m}\Big)\max\big\{P_{\beta_{\mlj}}(\mlalpha)
      \approxLambda_n(\alpha_{\mlj}),P_{\beta_{\mlj+1}}(\mlalpha)
      \approxLambda_n(\alpha_{\mlj+1})\big\},
    \label{eqn:reverseA}
  \end{align}
  where we have used that $m > C_L \geq 2$ in the last inequality.
  Adding up \eqref{eqn:reverseLeft}, \eqref{eqn:reverseRight}, and twice \eqref{eqn:reverseA}, we obtain \eqref{eqn:reversediscretize}, completing the proof of the theorem.
\end{proof}

\subsection{Proof of Theorem \ref{thm:equivalence:prior}}\label{sec:equivalence:prior}

We note that the model selection prior can be represented in spike-and-slab form \eqref{prior:spike-and-slab}  if and only if 
\begin{equation}\label{eqn:definetti}
{n \choose s}^{-1}\pi_n(s)=\int_0^1 \alpha^s (1-\alpha)^{n-s}d\Lambda_n(\alpha),\qquad \text{for all $s=0,1,...,n$}.
\end{equation}
This is closely related to a finite-sample version of de Finetti's
theorem for Bernoulli sequences: on the left-hand side of
\eqref{eqn:definetti} we have an exchangeable distribution on binary
sequences of length $n$ with $s$ ones, and on the right-hand side we
want to find the corresponding mixture $\Lambda_n$ of independent,
identically distributed Bernoulli random variables. Existing ways to
extend de Finetti's theorem to finite samples include allowing signed
mixtures \citep{KernsSzekely2006} or characterizing how well the
right-hand side can approximate the left-hand side in variational
distance \citep{DiaconisFreedman1980}. However, our setup does not allow
weakening the identity \eqref{eqn:definetti} in any way, so instead we
take the alternative approach of posing necessary and sufficient
conditions on $\pi_n$ such that \eqref{eqn:definetti} holds exactly.

Let us decompose the probability measure $\Lambda_n(\alpha)$ as a sum of a point mass at $\alpha=1$ and a measure $\tilde{\Lambda}_n$ which puts zero mass at $\alpha=1$, i.e. $\tilde{\Lambda}_n(\alpha)=\Lambda_n(\alpha)-\Lambda_n(1)\delta_{1}$. Then \eqref{eqn:definetti} can be written in the form
\begin{equation*}
\begin{split}
{n \choose s}^{-1}\pi_n(s)&=\int_0^1 \alpha^s (1-\alpha)^{n-s}d\tilde{\Lambda}_n(\alpha),\qquad \text{for all $s=0,1,...,n-1$},\\
\pi_n(n)-\Lambda_n(1)&=\int_0^1 \alpha^n d\tilde{\Lambda}_n(\alpha).
\end{split}
\end{equation*}
Next let us substitute $\alpha=u/(1+u)$ in the right-hand side of the
preceding displays, which makes them equal to
\begin{align*}
\int_0^1 \Big(\frac{\alpha}{1-\alpha}\Big)^{s}(1-\alpha)^n d\tilde{\Lambda}_n(\alpha)
  =\int_0^\infty u^s\frac{1}{(1+u)^{n}}d\tilde{\Lambda}_n\big(\frac{u}{1+u}\big),\qquad s=0,1,...,n.
\end{align*}
Note that since $\Lambda_n(1) \in [0,\pi_n(n)]$ can be chosen
arbitrarily, the parameter $c_n=\pi_n(n)-\Lambda_n(1)\in[0,\pi_n(n) ]$
can take any arbitrary value. Then by denoting the measure
$(1+u)^{-n}d\tilde{\Lambda}_n\big(\frac{u}{1+u}\big)$ on $[0,\infty)$ by
$d\bar\Lambda_n(u)$ we arrive at the equations
\begin{align*}
&\int_0^{\infty}u^{s}d\bar\Lambda_n(u)={n \choose s}^{-1}\pi_n(s),\qquad \text{for all $s=0,1,...,n-1$},\\
&\int_0^{\infty}u^{n}d\bar\Lambda_n(u)=c_n.
\end{align*}
This is called the truncated (or finite/reduced) Stieltjes moment
problem and the sufficient and necessary conditions for the existence  of a 
general Radon measure $\bar\Lambda_n$ on $[0,\infty)$, 
that satisfies the above equation system coincide with the conditions of
our theorem. See, for instance, Theorems~9.35 and~9.36 of
\cite{schmudgen:2017} for the odd and even case, respectively.

We note that all steps above are reversible: if, in view of the
truncated  Stieltjes moment problem, a measure $\bar{\Lambda}_n$ exists
for some $c_n\in[0,\pi_n(n)]$, then one can construct the measure
$\Lambda_n(\alpha)=(1-\alpha)^{-n}d\bar{\Lambda}_n\big(\alpha/(1-\alpha)\big)+(\pi_n(n)-c_n)\delta_{1}$
satisfying \eqref{eqn:definetti}. One can also see that $\Lambda_n$ will
then be a probability measure using Fubini's theorem:
\begin{align*}
1&=\sum_{s=0}^n\pi_n(s)= \sum_{s=0}^n\int_0^1 {n \choose s}\alpha^s (1-\alpha)^{n-s}d\Lambda_n(\alpha)\\
&= \int_0^1 \sum_{s=0}^n{n \choose s}\alpha^s (1-\alpha)^{n-s}d\Lambda_n(\alpha)=\Lambda_n([0,1]).
\end{align*}

\subsection{Proofs for the Examples from
Section~\ref{sec:connection}}\label{sec:examples}

\subsubsection{Proof of Example \ref{example:1}}
Let us take $c_n=p^n$ if $p\in[0,1)$, and $c_n=0$ if $p=1$.  Then the
vector $\mu$ in Theorem \ref{thm:equivalence:prior} takes the form
$\mu=\big(p^s(1-p)^{n-s}\big)_{s=0,1,...,n}$ for $p\in[0,1)$ and
$\mu=(0,\ldots,0)$ of length $n+1$ if $p=1$.  

Let us consider first the odd case $n=2k+1$. For $p=1$, both Hankel
matrices are the zero matrix, which is positive semi-definite, and the
zero-vector $\mu$ is inside of the column space of the first matrix.
Next assume that $p<1$. Then the first Hankel matrix $H_k(\mu)$ is
positive semi-definite (its eigenvalues are $\lambda_1 = (1-p)\sum_{\ell=0}^k
p^{2(k-\ell)}(1-p)^{2\ell}>0$ and $\lambda_2 = \cdots = \lambda_{k+1} =
0$). Similarly,
the second Hankel matrix $H_k(F\mu)$ is also positive semi-definite
(its eigenvalues are $\lambda_1 = p\sum_{\ell=0}^k p^{2(k-\ell)}(1-p)^{2\ell}\geq
0$ and $\lambda_2 = \cdots = \lambda_{k+1} = 0$). Finally, note that the vector
$v=(p^{k+\ell}(1-p)^{k+1-\ell})_{\ell=1,...,k+1}^\top$ is inside of the
column space of  $H_k(\mu)$ since $v$
is equal to $p/(1-p)$ times the last column of the matrix.

The even case $n=2k$ follows by similar arguments.

\subsubsection{Proof of Example \ref{example:2}}
Let us take $c_n=\pi_n(n) \propto \lambda^n e^{-\lambda}/n!$. Then the
vector $\mu$ in Theorem \ref{thm:equivalence:prior} takes the form $\mu
\propto \big(\lambda^s e^{-\lambda}(n-s)!/n! \big)_{s=0,1,...,n}$. 

Then let us consider first the odd case $n=2k+1$. We show that the
determinants of the leading principal minors of the Hankel matrices
$H_k(\mu)$ and $H_k(F\mu)$ are both positive for every $\ell \leq k$,
which implies that both matrices are positive definite. First we note
that by multiplying the rows by a positive constant the sign of the
determinant remains unchanged; therefore the determinant of the leading
principal minor of $H_k(\mu)$ of order $\ell+1$ has the same sign as the following matrix
\begin{align*}
  \begin{bmatrix}
    n! & (n-1)!\,\lambda & (n-2)!\,\lambda^2 & ... & (n-\ell)!\,\lambda^{\ell} \\
   (n-1)!\,\lambda & (n-2)!\,\lambda^2 & (n-3)!\,\lambda^3& ... &
   (n-\ell-1)!\,\lambda^{\ell+1}\\
   \vdots & \vdots& \vdots &\ddots&  \vdots\\
 (n-\ell)!\,\lambda^{\ell} & (n-\ell-1)!\,\lambda^{\ell+1} &
 (n-\ell-2)!\,\lambda^{\ell+2} &...&(n-2\ell)!\,\lambda^{2\ell}
  \end{bmatrix}.
\end{align*}
Then for computational convenience we note that the determinant of the
matrix does not change by mirroring it in the central point, i.e.\ transforming the matrix $A=(a_{i,j})_{1\leq i,j\leq n}$ into $B=(a_{n+1-i,n+1-j})_{1\leq i,j\leq n}$. Hence the preceding matrix has the same determinant as 
\begin{align}\label{mtx: modif}
  \begin{bmatrix}
  (n-2\ell)!\,\lambda^{2\ell} & ... &(n-\ell-2)!\,\lambda^{\ell+2}  & (n-\ell-1)!\,\lambda^{\ell+1} & (n-\ell)!\,\lambda^{\ell}  \\
   \vdots &\ddots & \vdots& \vdots&  \vdots\\
  (n-\ell-1)!\,\lambda^{\ell+1} & ... & (n-3)!\,\lambda^3 &  (n-2)!\,\lambda^2& (n-1)!\,\lambda \\
(n-\ell)!\,\lambda^{\ell}  & ...  & (n-2)!\,\lambda^2 &(n-1)!\,\lambda &  n! 
  \end{bmatrix}.
\end{align}
We also note that subtracting a multiple of a row from another does not
change the determinant of the matrix. Using this elementary step we will
derive an upper triangular matrix from the preceding one with positive
elements in the diagonal, which implies that the matrix has positive determinant. In the following we will use iteratively that 
$$(n-s_1)!-(n-s_1-1)!\,(n-s_2)=(n-s_1-1)!\, (s_2-s_1).$$
Then by subtracting $(n-\ell)/\lambda$ times the one before the last row
from the last row in \eqref{mtx: modif}, then $(n-\ell-1)/\lambda$ times
the two before the last row from the one before the last row and so on,
finishing with subtracting $(n-2\ell+1)/\lambda$ times the first row
from the second row, we get the matrix
\begin{align*}
  \begin{bmatrix}
  (n-2\ell)!\,\lambda^{2k} &  (n-2\ell+1)!\,\lambda^{2\ell-1} &(n-2\ell+2)!\,\lambda^{2\ell-2}  &... & (n-\ell)!\,\lambda^{\ell}  \\
0&(n-2\ell+1)!\,\lambda^{2\ell-2}   & (n-2\ell+2)!\,  2\lambda^{2\ell-3} & ...& (n-\ell)!\, \ell\lambda^{\ell-1}  \\
   \vdots &\vdots& \vdots & \ddots&  \vdots\\
 0& (n-\ell-1)!\,\lambda^{\ell}  & (n-\ell)!\,2\lambda^{\ell-1}   & ... & (n-2)!\,\ell\lambda \\
0&(n-\ell)!\,\lambda^{\ell-1} &(n-\ell+1)!\,2\lambda^{\ell-2}   & ...  &  (n-1)!\,\ell 
  \end{bmatrix}.
\end{align*}
As a next step we subtract again $(n-\ell)/\lambda$ times the one before the last row from the last row in \eqref{mtx: modif}, then $(n-\ell-1)/\lambda$ times the two before the last row from the one before the last row and so on finishing with  subtracting $(n-2\ell+2)/\lambda$ times the second row from the third row we get the matrix

\begin{align*}
  \begin{bmatrix}
  (n-2\ell)!\,\lambda^{2\ell} &  (n-2\ell+1)!\,\lambda^{2\ell-1} &(n-2\ell+2)!\,\lambda^{2\ell-2}  &... & (n-\ell)!\,\lambda^{\ell}  \\
0&(n-2\ell+1)!\,\lambda^{2\ell-2}   & (n-2\ell+2)!\,  2\lambda^{2\ell-3} & ...& (n-\ell)!\, \ell\lambda^{\ell-1}  \\
0&0  & (n-2\ell+2)!\,  2\lambda^{2\ell-4} & ...& (n-\ell)!\, \ell(\ell-1)\lambda^{\ell-2}  \\
   \vdots &\vdots& \vdots & \ddots&  \vdots\\
 0&0  & (n-\ell+1)!\,2\lambda^{\ell-1}   & ... & (n-2)!\,\ell(\ell-1)\lambda \\
0&0 &(n-\ell)!\,2\lambda^{\ell-2}   & ...  &  (n-1)!\,\ell(\ell-1)
  \end{bmatrix}.
\end{align*}
By iterating this algorithm we get an upper triangular matrix which has
positive values in the diagonal, finishing the proof of our statement.
The positive definiteness of the second Hankel matrix $H_k(F\mu)$
follows similarly. Finally note that since $H_k(\mu)$ is positive
definite every $(k+1)$-dimensional vector is inside of its column space,
including $v=(\mu_{k+1},\ldots,\mu_n)$.

The even case $n=2k$ follows by similar arguments.

\subsubsection{Proof of Example \ref{example:3}}
Let us consider the determinant of the leading principal minor of
$H_{k}(\mu)$ of order 2, where $n=2k$ or $n=2k+1$. The determinant of
this matrix is proportional to
\begin{align*}
  \det\begin{bmatrix}
 1& \frac{1}{n}\\
\frac{1}{n} & \frac{2^{-\lambda}}{n(n-1)/2}
  \end{bmatrix}= \frac{2^{1-\lambda}}{n(n-1)}-\frac{1}{n^2},
\end{align*}
which is negative for $n> 2^{\lambda-1}/(2^{\lambda-1}-1)$.
Hence the conditions of Theorem \ref{thm:equivalence:prior} do not hold and therefore the prior cannot be written in spike-and-slab form.

\subsubsection{Proof of Example \ref{example:4}}
Let us consider the determinant of the leading principal minor of
$H_{k}(\mu)$ of order 2, where $n=2k$ or $n=2k+1$. The determinant of
this matrix is proportional to
\begin{align*}
  \det\begin{bmatrix}
 1& \frac{e^{-1}}{n}\\
\frac{e^{-1}}{n} & \frac{e^{-2^\lambda}}{n(n-1)/2}
  \end{bmatrix}= \frac{2e^{-2^\lambda}}{n(n-1)}-\frac{e^{-2}}{n^2},
\end{align*}
which is negative for $n>c/(c-1)$ with $c=e^{2^{\lambda}-2}/2>1$. Hence
the conditions  of Theorem \ref{thm:equivalence:prior} do not hold and
the prior cannot be written in spike-and-slab form.
\end{document}